\DeclareFontShape{OT1}{cmr}{bx}{sc}{<-> Combs10}{}   
\crefname{equation}{}{}
\numberwithin{equation}{section}
\theoremstyle{plain}
\newcommand{\FBA}{\text{FBA}}
\newcommand{\FCA}{\text{FCA}}
\newcommand{\FVA}{\text{FVA}}
\newcommand{\FBAd}{\text{FBA}^\Delta}
\newcommand{\FVAd}{\text{FVA}^\Delta}
\newcommand{\FCAd}{\text{FCA}^\Delta}
\newcommand{\dqdt}{\df \dsQ \otimes \df t-\text{a.s}}
\newcommand{\btau}{\bar{\tau}}
\newcommand{\cadlag}{c\`adl\`ag}
\newcommand{\1}{\mathds{1}}
\newcommand{\calF}{\mathcal{F}}
\newcommand{\calC}{\mathcal{C}}
\newcommand{\calB}{\mathcal{B}}
\newcommand{\frD}{\textfrak{D}}
\newcommand{\frC}{\mathfrak{C}}
\newcommand{\calG}{\mathcal{G}}
\newcommand{\calO}{\mathcal{O}}
\newcommand{\Bt}{\tilde{B}}
\newcommand{\St}{\tilde{S}}
\newcommand{\Sc}{\check{S}}
\newcommand{\Xc}{\check{X}}
\newcommand{\cc}{\check{c}}
\newcommand{\Vc}{\check{V}}
\newcommand{\pic}{\check{\pi}}
\newcommand{\pt}{\tilde{P}}
\newcommand{\pc}{\check{p}}
\newcommand{\pit}{\tilde{\pi}}
\newcommand{\dsI}{\mathbb{I}}  
\newcommand{\dsE}{\mathbb{E}}
\newcommand{\dsF}{\mathbb{F}}
\newcommand{\dsH}{\mathbb{H}}
\newcommand{\dsQ}{\mathbb{Q}} 
\newcommand{\dsP}{\mathbb{P}}
\newcommand{\dsL}{\mathbb{L}}
\newcommand{\dsR}{\mathbb{R}}
\newcommand{\dsS}{\mathbb{S}}
\newcommand{\dsG}{\mathbb{G}}
\newcommand{\dsD}{\mathbb{D}}
\newcommand{\Jt}{\tilde{\Theta}}
\newcommand{\SUM}{\displaystyle\sum}
\newcommand{\IT}{^i_t}
\newcommand{\IS}{^i_s}
\theoremstyle{plain}
\newtheorem{thm}{Theorem}[section] % reset theorem numbering for each chapter
\newtheorem{corollary}[thm]{Corollary} 
\newtheorem{theorem}[thm]{Theorem} 
\newtheorem{lemma}[thm]{Lemma}
\newtheorem{assumption}[thm]{Assumption}
\newtheorem{definition}[thm]{Definition} 
\theoremstyle{definition}
\newtheorem{remark}[thm]{Remark} % definition numbers are dependent on theorem numbers
\newtheorem{example}[thm]{Example}
\declaretheoremstyle[
  style=definition,
  qed=\openbox,
]{example}
\newtheorem{assumption}{Assumption}
\newtheorem{definition}{Definition}
\newtheorem{lemma}{Lemma}
\newtheorem{theorem}{Theorem}
\theoremstyle{plain}
\newtheorem{example}{Example}
\newtheorem{remark}{Remark}
\newcommand*\df{\mathop{}\!\mathrm{d}}
\newcommand{\Rom}[1]{\uppercase\expandafter{\romannumeral #1\relax}}
\newcommand{\rom}[1]{\lowercase\expandafter{\romannumeral #1\relax}}
\newcommand{\RN}[1]{%  
  \textup{\uppercase\expandafter{\romannumeral#1}}% 
}
\newcounter{current-cite}
\newcounter{currentcitetotal}
\renewenvironment{abstract}{\small
  \begin{center}
  \bfseries \abstractname\vspace{-.5em}\vspace{0pt}
  \end{center}
  \list{}{
    \setlength{\leftmargin}{0.7cm}% 
    \setlength{\rightmargin}{\leftmargin}%
  }%
  \item\relax}   
 {\endlist}
 \title{Binary Funding Impacts in Derivative Valuation
   \thanks{\textbf{Acknowledgments: } The authors are grateful to the anonymous
     co-editor, associate editor, and reviewers for their invaluable efforts to
     improve this paper. The authors also thank St\'ephane Cr\'epey for
     insightful comments. Finally, the authors express gratitude to Dylan
     Possama\"{i} for discussions at the Workshop on Risk Measures, XVA
     Analysis, Capital Allocation and Central Counterparties (2016).}  }
 \author{Junbeom Lee\thanks{ Department of Sales and Trading, Yuanta Securities
     Korea, 04538 Seoul, Korea.  Email:
     \href{junbeoml22@gmail.com}{junbeoml22@gmail.com}.
     Most  of this work was carried out when the author was at the National
     University of Singapore. Opinions in this paper are those of the author,
     and do not represent the views of Yuanta Securities Korea.
     This author is supported by Singapore MOE AcRF grants R-146-000-255-114 and
     R-146-000-243-114.  } \quad \quad \quad
   Chao Zhou \thanks{
     Department of Mathematics, Institute of Operations Research and Analytics
     and Suzhou Research Institute, National Univesity of Singapore, Singapore
     119076, Singapore.
     Email: \href{mailto:matzc@nus.edu.sg}{matzc@nus.edu.sg}.
     This author is supported by Singapore MOE
     AcRF Grants  R-146-000-271-112, R-146-000-255-114 and R-146-000-219-112. }}
\author{Junbeom Lee\footnote{Department of Sales and Trading, Yuanta Securities
    Korea, 04538 Seoul, Korea
    (\href{junbeoml22@gmail.com}{junbeoml22@gmail.com}).
    Most part of this work was carried out when 
    the author was in National University of Singapore. Opinions in this paper
    are those of the author, and do not represent the view of Yuanta Securities Korea..}
\and Chao Zhou\footnote{Department of Mathematics,  National University of
  Singapore (\href{mailto:matzc@nus.edu.sg}{matzc@nus.edu.sg}).
}} 
\date{}
\providecommand{\keywords}[1]{\textbf{\textit{Key-words: }} #1}
\providecommand{\AMS}[1]{\textbf{\textit{AMS subject classifications: }} #1}
\begin{document}
\begin{titlepage}
  \vspace*{-1cm} 
{\let\newpage\relax\maketitle}   
\begin{abstract} 
  We discuss the binary nature of funding impact in derivative valuation.  Under
  some conditions, funding is either a cost or a benefit, i.e., one of the
  lending/borrowing rates does not play a role in pricing derivatives.  When
  derivatives are priced, considering different lending/borrowing rates leads to
  semi-linear BSDEs and PDEs, and thus it is necessary to solve  the equations
  numerically. However, once it can be guaranteed that only one of the rates affects
  pricing, linear equations can be recovered and analytical formulae can be derived.
  Moreover, as a byproduct, our results explain how \textit{debt value
    adjustment} (DVA) and funding benefits are dissimilar.  It is often believed
  that considering both DVA and funding benefits results in a double-counting
  issue but it will be shown that the two components are affected by different
  mathematical structures of derivative transactions. We find that
  funding benefit is related to the decreasing property of the payoff function, 
  but this relationship decreases as the funding choices of underlying
  assets are transferred to repo markets.
\end{abstract}
\vspace{0.3cm} 
\noindent\keywords{BSDEs; Malliavin calculus; bilateral contracts; funding benefits; non-linear markets.}
%\vspace{0.5cm}

\noindent\AMS{60H07, 91G20, 91G40}
\end{titlepage}

\addtocounter{page}{1}

\section{Introduction}
For bank's trading desks, it had been a standard to make \textit{credit value
  adjustment} (CVA) and \textit{debt value adjustment} (DVA). After the
financial crisis in 2007-2008, banks started to consider their funding spreads
arising from the widened gap between the London Inter-Bank Offered Rate (LIBOR)
and the Overnight Indexed Swap (OIS) rate. Since their lending and borrowing
rates differ significantly, non-linear financial market models are requisite for
pricing derivatives with funding impacts. These general practices forced us to
re-examine existing theory. Definitions of fair value under entity-specific
parameters such as default risk and funding spreads are introduced by
\cite{bielecki2015valuation, bielecki2018arbitrage, bichuch2017arbitrage,
  nie2018american}. In addition, many other pricing methodologies have been
developed for collective adjustment, see, for example,
\cite{piterbarg2010funding, wu2015cva, li2016fva, brigo2011collateral,
  crepey2015bilateral1, crepey2015bilateral2}.

However, certain conundrums among the adjustments have remained. One of these
conundrums is whether DVA double-counts funding benefits. Funding benefits
reduce the amount of funds that the bank needs to raise, but the cost of raising
funds depends on the bank's default risk.  Therefore, including both DVA and
funding benefits may inflate the bank’s reported profit as well as deflate the
price charged to counterparties, see, e.g., \cite{cameron2013black} and Chapter
15 in \cite{gregory2015xva}.  Moreover,  considering banks' own funding rates can be
arguable. According to the Modigliani-Miller (MM) theorem, choices of funding should
not be considered in pricing \citep{modigliani1958cost, stiglitz1969re}.  In
practice, traders feel confident that funding costs are observable in derivative
transactions \citep[see][]{andersen2019funding}).  As pointed out by
\cite{hull2012fva}, however, if funding rates are truly an element to determine
derivative prices, the existence of Treasury bonds is enigmatic because banks
purchase Treasury bonds that return less than their funding rates.

The above issues will be discussed with the main results of this paper. Our main
contribution is demonstrating the binary nature of funding impacts, in which
funding is either a benefit or a cost for many derivative contracts. For
example, it will be shown later that, based on our model, when purchasing bonds,
the trader will never enter a borrowing position, i.e., there is no funding cost
and funding is always a benefit in the case. Consequently, if we assume that the
lending rate is equal to the OIS rate \citep[for example, as
in][]{burgard2010partial}, no funding impact should be considered. This may
explain why banks do not recoup funding costs when purchasing Treasury bonds,
while funding costs are observable in other derivative transactions. In
addition, the switching between funding costs and benefits depends on the
monotone property of the payoff function of derivatives. Since DVA occurs where
the seller has an obligation to pay, i.e., where the payoffs are positive, DVA
and funding benefit are affected by different mathematical structures of the
payoffs. This result supports a part of the conclusion in
\cite{andersen2019funding} that DVA should be considered in pricing.

Another contribution of our results is that we can recover linear equations to
price derivatives. Because different lending/borrowing rates make the pricing
equations semi-linear, analytical solutions do not exist in general. Therefore,
we need to solve the equations numerically. For an attempt to approximate
funding adjustment of contracts with short maturity, readers may want to refer
to \cite{gobet2015analytical}. Moreover, under the crucial assumption that
lending and borrowing rates are identical, closed-form solutions were calculated
by \citep{piterbarg2010funding, bichuch2017arbitrage, brigo2017funding}.
However, once we can guarantee that one of funding rates does not play any role
in pricing derivatives, the assumption of the same borrowing/lending rates is
superfluous.

Even though this binary nature enables us to find analytical formulae for a
large class of derivatives, depending on the close-out conventions, the
analytical formulae may or may not be represented in a closed-form. In this
paper, we deal with two close-out conventions: \textit{clean price} and
\textit{replacement cost}. \textit{Clean price} (resp. \textit{replacement
  cost}) is the risk-neutral price without (resp. with) value adjustments. Under
\textit{clean price}, because pricing measures are not matched, the analytical
formulae cannot be represented in a closed-form. Indeed, to avoid this
inconsistency, \cite{bichuch2017arbitrage} assumed a flexibility to choose a
pricing measure for calculating close-out amount, and \cite{brigo2017funding}
considered only un-collateralized contracts with null cash-flow at defaults. In
the case of \textit{replacement cost}, the mismatch does not appear, because, in
our model, the funding rates are tacitly embedded in the \textit{replacement
  cost}.  In this way, we can provide closed-form solutions under
\textit{replacement cost}.

In our model, we include CVA, DVA, incremental funding impacts, and 
\textit{variation margin}. The reference filtration is generated by a Brownian
motion. We then progressively enlarge the filtration by default times of the two
parties. The default times are assumed to have intensities. Since we do not assume
that interest rates are deterministic, our results can be applied to interest
rate derivatives. In the main theorems, we assume that volatility and default
intensities are deterministic to avoid heavy calculations, but stochastic
parameters do not necessarily alter the main results.  We report one example
in which intensities are not deterministic in the Appendix.

The remainder of this paper proceeds as follows. In \Cref{sec:modeling}, we
introduce our setup concerning filtration and intensities, and construct the
\textit{incremental hedging portfolio}. In \Cref{sec:bsde}, we introduce a BSDE
to price derivatives on the enlarged filtration. Instead of dealing with the
BSDE, we define an adjustment process in \Cref{sec:xva} and the adjustment
process is reduced to a BSDE on the reference filtration as in
\cite{crepey2015bsdes}. Our main results are then provided in
\Cref{sec:main}. To prove the main theorems, iterative transformations of the
adjustment process are needed and the transformations depend on the close-out
conventions. In \Cref{sec:example}, examples are given and we provide a
closed-form solution for a stock call option with \textit{replacement cost}.
 
\section{Modeling}\label{sec:modeling}

\subsection{Mathematical Setup}\label{sec:setup}
We consider two parties entering a bilaterally cleared contract. We term one
party a ``hedger'' and the other party a ``counterparty''. We sometimes address
the hedger (resp. counterparty) as ``she'' (resp. ``he''). An index $H$
(resp. $C$) will be used to represent the hedger (resp. counterparty). The
argument of this paper is conducted from the perspective of the hedger. The
hedger is a financial firm that holds a portfolio to hedge the exchanged
cash-flows of the contract. The counterparty may or may not be a non-financial
firm.  Note that when we refer to  a ``dealer'' or ``bank'', we are not necessarily
referring to the hedger since the counterparty can also be a bank.

Let $\dsQ$ be a risk-neutral probability where the account of OIS rate is the
num\'eraire. We will impose an assumption later so that the risk-neutral
probability is unique. Then we consider $(\Omega, \calG, \dsQ)$.  Let $\dsE$ denote
the expectation under $\dsQ$. We consider random times $\tau^i$,
$i \in \{H, C\}$, $\tau^i \colon (\Omega, \calG) \to (\dsR_+, \calB(\dsR_+))$, which
represent the default times of the hedger and counterparty. For $i\in \{H, C\}$,
we assume that $ \dsQ(\tau^i=0) =0$ and
$\dsQ(\tau^i>t) > 0, ~ t \geq0$. We also denote
 $\tau \coloneqq  \tau^H \wedge \tau^C$, $\btau \coloneqq \tau \wedge T$,
where $T$ is the maturity of the bilateral contract.

Let $W=(W^1, \dots, W^n)$ be a standard $n$-dimensional Brownian motion under
$\dsQ$. Let $\dsF = (\calF_t)_{t \geq0}$ be the \textit{usual natural filtration} 
of $(W_t)_{t \geq0}$. Then we define the following:
\begin{align}
 \dsG = (\calG_t)_{t \geq0}
  \coloneqq \Big(\calF_t \vee \sigma\big(\{\tau^i \leq u\}
  ~\colon~ u \leq t, i \in \{H, C\} \big)\Big)_{t \geq0}.  \nonumber 
\end{align}
We term $\dsF$ (resp. $\dsG$) the reference filtration (resp. full filtration). 

We then consider a filtered probability space $(\Omega, \calG, \dsG,\dsQ)$.  Note
that $\tau^i$, $i \in \{H, C\}$, are $\dsG$-stopping times, but may not be
$\dsF$-stopping times.  As a convention, for
any $\dsG$-progressively measurable process $u$, $(\dsG, \dsQ)$-semimartingale
$U$, and $s \leq t$,
$\int_{s}^{t}u_s \df U_s \coloneqq \int_{(s, t]}^{}u_s\df U_s$, where the integral is
well-defined. In addition, for any $\dsG$-stopping time $\theta$ and process
$(\xi_t)_{t \geq0}$, we denote
$\xi^\theta_\cdot \coloneqq \xi_{\cdot \wedge \theta}$, and when $\xi_{\theta-}$ exists,
$\Delta \xi_\theta\coloneqq \xi_\theta - \xi_{\theta-}$.  In the following, for
$i \in \{H, C\}$, $t \geq0$, we denote
$G\IT \coloneqq \dsQ(\tau^i > t|\calF_t)$, and
$G_t\coloneqq \dsQ(\tau > t|\calF_t)$.  The following assumption stands throughout
this paper.
\begin{assumption}\label{assm:intensity}
\begin{enumerate}[(i)]
\item $(G_t)_{t \geq0}$ is non-increasing and absolutely continuous
  with respect to Lebesgue measure. 
\item For any $i \in \{H, C\}$, there exists a process $h^i$, defined as:
\begin{align}
  h^i_t \coloneqq \lim_{u\downarrow0}\frac{1}{u}\frac{\dsQ(t < \tau^i \leq t+u, \tau > t|\calF_t)}{\dsQ(\tau >
  t|\calF_t)}, \nonumber 
\end{align}
and the process $M^i$, given by
$M^i_t \coloneqq \1_{\tau^i \leq t \wedge \tau}   - \int_{0}^{t \wedge \tau}h^i_s \df s$,
is a $(\dsG, \dsQ)$-martingale. 
\end{enumerate} 
\end{assumption}  
By (\rom{1}) in
\Cref{assm:intensity}, there exists an $\dsF$-progressively measurable process
$(h^0_t)_{t\geq0}$ such that:
\begin{align}
   h^0_t = \lim_{u\downarrow0}\frac{1}{u}\frac{\dsQ(t < \tau \leq t+u|\calF_t)}{\dsQ(\tau >
  t|\calF_t)},  \nonumber 
\end{align}
and
$(M_t)_{t\geq0}\coloneqq(\1_{\tau\leq t} - \int_{0}^{t\wedge\tau}h^0_s\df s)_{t\geq0}$ 
is also a $(\dsG, \dsQ)$-martingale. Let us denote $h \coloneqq h^H + h^C$. If
$\tau^H$ and $\tau^C$ are independent  conditioning on $\dsF$, $h^0 = h$. This is, in general, not the
case.  Moreover, by (\rom{1}) in \Cref{assm:intensity}, $\tau$ avoids any
$\dsF$-stopping time \cite[see][Corollary 3.4]{coculescu2012hazard}. In other
words, for any $\dsF$-stopping time $\tau^\dsF$:
\begin{align} 
 \dsQ(\tau = \tau^\dsF) = 0.  \label{avoid}
\end{align}
We use the typical notations $\dsL^p_T$, $\dsS^p_T$, $\dsH^{p, m}_T$, and
$\dsH^{p, m}_{T, loc}$ for spaces of random variables and stochastic processes, which
are defined in \Cref{sec:spaces}. 
Moreover, we let $D_{\theta}=(D^1_\theta, \dots, D^n_\theta)$ denote Malliavin derivative at
$\theta \geq0$, and $\dsD^{1, 2}$ denote the set of Malliavin differentiable random
variables. For Malliavin calculus, readers can refer to \cite{di2009malliavin}
and Section 5.2 in \cite{el1997backward}. 
For notational simplicity, when $n=1$, we denote $D_\theta = D^1_\theta$,
$W = W^1$, and $\dsH^p_T\coloneqq \dsH^{p, 1}_T$,
$\dsH^p_{T, loc}\coloneqq \dsH^{p, 1}_{T, loc}$.
In the next section, we describe
hedging portfolios under CVA, DVA, funding impacts, and collateral.
\subsection{BSDEs under Nonlinear Markets}
\subsubsection{Cash-flows}
We consider a hedger and a counterparty entering a contract which exchanges
promised dividends. Let $\frD_t$ denote the accumulated amount of the promised
dividends up to $t \geq0$. We assume that $\frD$ is an $\dsF$-adapted c\`ad\`ag
process, and the value is determined by an $n$-dimensional underlying asset
process $S=(S^1, \dots, S^n)$ that follows the  stochastic differential
equation (SDE) under $\dsQ$:
\begin{align}
  \df S\IT =& r_tS\IT \df t + \sigma\IT S\IT \df W^{}_t, ~~S_0^i >0, 
~~i \in\{ 1, \dots, n\},\label{non-defaultable}
\end{align}
for some $\dsF$-progressively measurable processes $(\sigma^i)^\top\in \dsR^n$ and
$r$, which represents the OIS rate. We assume the following:
\begin{assumption}
 The hedger can access  defaultable zero coupon bonds of the 
hedger and the counterparty.  
\end{assumption}
Let $S^H$ (resp. $S^C$) denote the defaultable bond of
the hedger  (resp. the counterparty), where $S^H$ and $S^C$ follow the next SDE:
\begin{align}
  \df S\IT =& r_tS\IT \df t + \sigma\IT S\IT \df W^{}_t-S^i_{t-}\df M^i_t,~~S_0^i>0,
              ~~i \in\{ H, C\},\label{defaultable}
\end{align} 
where $(\sigma^i)^{\top} \in \dsR^{n}$ are $\dsF$-progressively measurable
processes. $S^1, \dots, S^n$ are used to hedge $\frD$, while $S^H$ and $S^C$ are
used to hedge the loss from breaching of the contract,
which will later be denoted by $\Theta$.  We denote
$\Sigma\coloneqq [(\sigma^1)^\top \cdots (\sigma^n)^\top]^\top$ and
$\sigma \coloneqq [\Sigma^\top~ (\sigma^H)^\top~ (\sigma^C)^\top]^{\top}$. In the following, we assume:
\begin{assumption}
$\Sigma\in \dsR^{n\times n}$ is invertible.  
\end{assumption}
By this assumption, we justify the existence of a risk-neutral probability
measure which is given in the early stage of this paper in
\Cref{sec:setup}. Moreover, the risk-neutral probability is unique. A brief
explanation for the change of measure is provided in \Cref{sec:risk-neutral}.
\begin{remark}
  By (\ref{non-defaultable}), underlying assets do not depend on default risk,
  which means that we do not deal with credit derivatives. For modeling with
  emphasis on contagion risk, readers can refer to
  \cite{jiao2013optimal, brigo2014arbitrage, bo2019locally, bo2017credit}. 
\end{remark}

\iffalse
%erase
\begin{remark}
Note that an $\dsF$-adapted process may depend on default risks, since the
default intensities are $\dsF$-adapted.  
\end{remark}
\fi

If a default occurs prior to the maturity of the contract $T$, two parties stop
exchanging $\frD$, and the derivative contract is marked-to-market. The method
to calculate the close-out amount is determined before initiation of the
contract and documented in the Credit Support Annex (CSA)\footnote{A part of
  ISDA master agreement}. Let $e_t$ denote the close-out amount at $t \leq T$.  In
this paper, we deal with two conventions for the close-out amount $e$:
\textit{clean price} and \textit{replacement cost}.  We will explain the
conventions in the subsequent section after we define the hedger's hedging
portfolio. As conventions, $\df \frD_t \geq0$ (resp. $<0$) and $e_{t} \geq0$
(resp. $<0$) mean that the hedger pays to (resp. receives from ) the
counterparty at $t \leq T$. For example, if the hedger purchases a zero coupon bond of
unit notional amount, then $ \frD = -\1_{\llbracket T, \infty\llbracket}$.
 
The obligation to settle $e$ may not be fully honored due to the default. To
mitigate this risk, the two parties post or receive collateral (often referred to
as margin). The amount of the collateral posted at $t\geq0$ is denoted by $m_t$.
We assume that $(m_t)_{t\geq0}$ is an $\dsF$-adapted process. By
\Cref{assm:intensity}, $\tau^i$, $i \in \{H, C\}$, are totally inaccessible, which
means that the defaults arrive unexpectedly. Margins are posted because we do
not know full information about the defaults, and this is why $m$ is calculated
on the observable information $\dsF$.  The exact forms of $m$ will be given
later after the conventions for $e$ are introduced. We assume that the
close-out payment is settled at the moment of default without delay and $m$ is
posted continuously.  As a convention, if $m_t \geq0$ (resp.  $<0$),    
the hedger posts (receives) the collateral at $t \leq T$.

\begin{remark}
  In practice, a gap exits between the day of default and the actual settlement
  to verify whether the default really happened, collect information about the
  contract, and find the best candidate to replace the defaulting party
  \citep{murphy2013otc}. For \textit{gap risk}, two parties post an
  \textit{initial margin}. If we consider \textit{initial margin}, we encounter
  anticipative backward stochastic differential equations (ABSDEs) under
  \textit{replacement cost}.  For the main result of this paper, Malliavin
  calculus for BSDEs will be utilized. However, to the best of our knowledge,
  Malliavin differentiability of ABSDEs has not yet been investigated. The
  continuous posting of \textit{variation margin} can also be viewed as a
  simplification. For relaxation of the condition, readers may refer
  to  \Citet{brigo2014nonlinear}. 
\end{remark}       
At default, collateral is not exchanged. We set the collateral amount at
$\tau \leq T$, as $m_{\tau-}$, and thus the cash-flow at default can be
$\Delta\frD_{\tau} + e_{\tau} - m_{\tau-}$.  However, whether or not we separate
$\Delta\frD _{\tau}$ from $e $ in the modeling is immaterial, because jumps of
$\dsF$-adapted \cadlag\ processes are exhausted by $\dsF$-stopping times
\citep[see][Theorem 4.21]{he1992semimartingale}. Therefore, by (\ref{avoid}),
$\Delta \frD _{\tau} =0$, a.s.  Let $\frC $ denote the accumulated cash-flows. Then,
almost surely for any $t \leq T$:
\begin{align}
\frC _t \coloneqq \1_{\tau > t}\frD _t + \1_{\tau \leq t}(\frD _{\tau} + e _{\tau})
  -\1_{\tau =\tau^H\leq t}L^H(e _{\tau} - m _{\tau-})^+
  + \1_{\tau=\tau^C\leq t}L^C(e _{\tau}   - m _{\tau-})^-,\label{def.cf} 
\end{align}
where $0\leq L^H\leq1$ (resp. $0\leq L^C\leq1$ ) is the loss rate of the hedger
(resp. counterparty). We denote the following:
\begin{align}
   \Theta\coloneqq\frC -\frD^\tau. \label{def:J}
\end{align}
In other words,
$\Theta_t = \1_{\tau = \tau^H \leq t}\big[e_\tau - L^H(e _{\tau} - m _{\tau-})^+\big] +\1_{\tau = \tau^C \leq
  t}\big[e_\tau + L^C(e _{\tau} - m _{\tau-})^-\big]$.  Generally, $\Theta-e$ represents the
loss inflicted to the hedger from breaching of the contract.
\subsubsection{Accounts and Hedging Strategy}
In this section, we introduce several saving accounts and the hedger's
hedging strategy. Let $I \coloneqq \{1, 2, \dots, n, H, C\}$, and
for any $i \in I$, $\eta^{S, i}$ denote the number of units of $S^i$ held by the
hedger. We assume that $\eta^{S, i}$ is $\dsG$-predictable. We denote:
\begin{align}
  \varphi \coloneqq& (\eta^{S, 1}, \dots, \eta^{S, n, }, \eta^{S, H}, \eta^{S, C}),
                     \nonumber \\
  \pi^i \coloneqq& \eta^{S, i}S^i, ~~i \in \{1, \dots, n\} ,\nonumber \\
\pi^i \coloneqq& \eta^{S, i}S^i_-, ~~i \in \{H, C\} ,\nonumber \\
\pi \coloneqq& (\pi^1, \dots, \pi^n, \pi^H, \pi^C) .\nonumber 
\end{align}
We call the $(n+2)$-dimensional $\dsG$-predictable process $\varphi$ the
hedger's hedging strategy. 
\begin{remark}
  $\varphi$ is chosen to be $\dsG$-adapted only to describe an immediate action
  taken at default. It will be shown in \Cref{sec:bsde} that, on $[0, \tau)$,
  $\varphi$ is $\dsF$-adapted.
\end{remark}

If the collateral is pledged, the posting party is remunerated by the receiving
party according to a certain interest rate. When $m_t\geq0$ (resp. $<0$), the
counterparty (resp. hedger) pays the interest rate $R^{m, \ell}_t$ (resp.
$R^{m, b}_t$)  at $t \leq T$. We assume  that the collateral is posted
as cash and the interest rate is accrued to a margin account of the hedger. We 
denote the lending and borrowing accounts $B^{m, \ell}$ and $B^{m, b}$
respectively, i.e., $B^{m, i}$, $i \in\{\ell, b\}$,  are given by:
\begin{align}
 \df B^{m, i}_t = R^{m, i}_tB^{m, i}_t\df t, ~~B^{m, i}_0=1.\label{sde:margin}
\end{align}
Let $\eta^{m, \ell}$ (resp. $\eta^{m, b}$) indicate the number of units of
$B^{m, \ell}$ (resp. $B^{m, b}$). Then the following equations hold:
\begin{eqnarray}
  &\eta^{m, \ell} \geq0, ~~\eta^{m, b}\leq0, ~~\eta^{m, \ell}\eta^{m, b}=0, \label{cond:margin1}\\
 & \eta^{m, \ell}B^{m, \ell} + \eta^{m, b}B^{m, b} = m. \label{cond:margin2}
\end{eqnarray}
We assume that the  \textit{variation margin} $m$ can be
\textit{rehypothecated}, i.e., $m$ is used by the hedger to maintain her
portfolio.

Some underlying assets can be traded through repo markets. We denote the set of
indices  for which a repo market is available by
$ \rho \subseteq I \coloneqq \{1, 2, \dots, n, H, C\}$. We assume that the borrowing and
lending repo market rates are the same, and for $i \in \rho$, let $R^{\rho, i}$ indicate
the repo rate. Moreover, for any $i \in\rho$, let $B^{\rho, i}$ denote the account that
$R^{\rho, i}$ accrues, i.e., $B^{\rho, i}$ follows:
\begin{align}
\df B^{\rho, i}_t = R^{\rho, i}_tB^{\rho, i}_t\df t, ~~B^{\rho, i}_0=1. \label{sde:repo}
\end{align}
For $i \in \rho$, we denote  the number of units of $B^{\rho, i}$ by $\eta^{\rho, i}$. It then
follows that for any $i \in \rho$:
\begin{align}
 \eta^{\rho, i}B^{\rho, i} + \eta^{S, i}S^{i} = 0.\label{cond:repo}
\end{align}

If the hedger has any surplus cash, she can earn the  lending rate $R^\ell$;
whereas, when
borrowing money, she needs to pay the borrowing rate $R^b$. For $i \in
\{\ell, b\}$, let $B^i$ denote the hedger's funding account and $\eta^i$ denote the
number of units of $B^i$. Therefore, it follows that:
\begin{eqnarray}
  & \eta^\ell \geq0, ~~\eta ^b \leq 0, \\
  &\df B^i_t = R^i_tB^i_t\df t, ~~B^i_0=1, ~~i\in \{\ell, b\}. \label{sde:funding}
\end{eqnarray}
Finally, the account on the OIS rate is denoted by $B$, i.e., for $t\geq0$, $\df B_t =
r_t B_t \df t$ and $B_0=1$. 
\subsection{Hedger's Hedging Portfolio}
We are now ready to define the hedger's portfolio. 
\begin{definition} \label{def:iself}
If $V=V(V_0, \varphi, \frC)$ defined on $t \in \dsR_+$, by:
\begin{align}
  V_t = \eta^{\ell}_tB^\ell_t +\eta^{b}_tB^b_t+ \eta^{m,\ell}_tB^{m, \ell}_t+\eta^{m, b}_tB^{m, b}_t
  +\SUM_{i \in I}\eta^{S, i}_t  S\IT+\SUM_{i \in \rho}\eta^{\rho, i}_t B^{\rho, i}_t, \label{self}
\end{align}
satisfies:
\begin{align}
  V_t =& V_0 +\SUM_{i =\ell, b}\int_{0}^{t \wedge \btau}\eta^{i}_s\df B^i_s
         + \SUM_{i =\ell, b}\int_{0}^{t \wedge \btau}\eta^{m,i}_s\df B^{m, i}_s +\SUM_{i \in
         I}\int_{0}^{t \wedge \btau}\eta^{S, i}_s  \df S\IS \nonumber\\
       &+\SUM_{i \in \rho}\int_{0}^{t \wedge \btau}\eta^{\rho, i}_s \df B^{\rho, i}_s
         - \frC_{t\wedge \btau},
    \label{selfd}
\end{align}
for any $t \in \dsR_+$, then $V$ is called the hedger's hedging
  portfolio. 
\end{definition}
\begin{remark}
 Note that by (\ref{def.cf}),  $\frC_t = \frC_{t \wedge\btau}$, for any $ t \geq0$, and by
 (\ref{selfd}), $V_t = V_{t \wedge \btau}$, for any $ t \geq0$. 
\end{remark}

Our goal is to find a fair price charged to the counterparty, and a hedging
strategy $\varphi$ satisfying \Cref{def:iself}. We seek to impose a terminal
condition so that an incremental funding effect can be considered. This
incremental funding effect means the difference between the funding cost/benefit
of two choices: entering or not entering the new contract. To explain the
mathematical details, let $B^\epsilon$ denote the endowed bank's portfolio
without entering the new contract, for some $\epsilon \in \dsR$ such that:
$B^\epsilon_{0} = \epsilon$. We term $B^\epsilon$ a \textit{legacy
  portfolio}. We assume that the \textit{legacy portfolio} is locally risk-less
and grows with respect to its funding rates. Therefore, for any $t \geq0$:
\begin{align}
  B^\epsilon_t \coloneqq & \epsilon\exp{\Big(\int_{0}^{t}R^\epsilon_s \df s\Big)},  \nonumber\\
  R^\epsilon \coloneqq &\1_{\epsilon \geq0} R^\ell +   \1_{\epsilon <0} R^b,\nonumber
\end{align}
and we denote $s^\epsilon\coloneqq R^\epsilon -r$.

Now, let us regard $V$ as the bank's summed profit/loss and consider two
cases. First, if the bank does not enter the contract exchanging $\frC$, the
bank will have $B^\epsilon_{\btau}$ at $\btau$. Second, the bank can enter the contract
with a certain initial price,  denoted by $p\in \dsR$, for the
contract from the counterparty. We will investigate a fair value $p \in \dsR$
taking the opportunity cost of not entering the contract into account.
\begin{definition}
  Let $p\in \dsR$ and $\varphi$ be an $\dsR^{n+2}$-valued $\dsG$-predictable
  process. We call $p$ and $\varphi$ the \textit{replicating price} and
  \textit{replicating strategy} of $(\epsilon, m, \frC)$, respectively, if
  $V_{\btau}(p+\epsilon, \varphi, \frC) = B^\epsilon_{\btau}$.
\end{definition}
By (\ref{selfd}) together with (\ref{non-defaultable}), (\ref{defaultable}),
(\ref{sde:margin}), (\ref{cond:margin1}), (\ref{cond:margin2}),
(\ref{sde:repo}), (\ref{cond:repo}), (\ref{sde:funding}), it is easy
to confirm that the replicating price can be obtained by:
\begin{align}
p=V_0 - \epsilon    \label{i.price}
\end{align}
where $V$ is the solution of the following BSDE (under $\dsQ$):
\begin{align}\label{selfd2} 
V_t = 
  &B^\epsilon_{\btau}+\int_t^{\btau}\df \frC_s
    -\int_{t}^{\btau}\bigg[\big(V_s - m_s - \SUM_{i \in I \setminus \rho} \pi^i_s\big)^{+}R^{\ell}_s
      -\big(V_s - m_s - \SUM_{i \in I \setminus \rho}\pi^i_s\big)^{-}R^{b}_s
      +r_s\SUM_{i \in I}\pi\IS-\SUM_{i \in \rho}R^{\rho, i}_s\pi\IS\bigg]\df s\nonumber\\
  &-\int_{t}^{\btau}\bigg[ R^{m, \ell}_sm^+_s - R^{m, b}_sm^+_s\bigg]\df s
    -\SUM_{i \in I}\int_{t}^{\btau}\pi\IS\sigma\IS\df W_s
    + \SUM_{i =H, C}\int_{t}^{\btau}\pi^i_s\df M^i_s.
\end{align} 
For now, we do not examine the existence and uniqueness of
(\ref{selfd2}). The solvability will be examined with a reduced form of
(\ref{selfd2}). The financial interpretation of each component in (\ref{selfd2})
will be provided  in the following section.
Prior to continuing, readers may want to refer to \Cref{sec:inc}, where we
discuss the incremental funding impacts with a simple example.

In the following, for simplicity, we impose some realistic assumptions on
interest rates. In practice, $R^{m, i}$, $i \in \{\ell, b\}$, are chosen as Federal
funds or EONIA rates, i.e., approximately $r$. In addition, the difference between the
OIS
and repo market rates can be interpreted as the liquidity premium of the repo
markets.  We assume that the repo markets are sufficiently liquid for the difference to be
small. Moreover, we assume that the OIS rate, $(r_t)_{t\geq0}$, is the smallest among
all interest rates. 
These assumptions are summarized as follows:
\begin{assumption}\label{assm:rate.exo}
\begin{enumerate}[(i)]
\item $R^{\rho, i} = R^{m, \ell} =R^{m, b}= r$, for $i \in \rho$.
\item $R^\ell \geq r$ and $R^b\geq r$.
\end{enumerate}
\end{assumption}
\begin{remark}
The assumption on repo market rates given in \Cref{assm:rate.exo}-(\rom{1}) is  
merely for simplicity in representing (\ref{selfd2}). Mathematically, it does
not play a  crucial role.
\end{remark}
Recall that we have not yet specified the close-out amount $e$ in
$\frC$. In the next section, two important close-out conventions are introduced:
\textit{clean price} and \textit{replacement cost}. \textit{Clean price} will
also play an important role in reducing filtration of the BSDE \cref{selfd2}.
\subsubsection{Close-out Conventions}
% 
\iffalse
\label{sec:close}
The close-out amount is calculated by a determining party which will act in good
faith \citep{isda2009close}. As stated in \citet[p.15]{isda2009close}, ``the
Determining Party may consider any relevant information, including, without
limitation, one or more of the following types of information: quotations
(either firm or indicative) for replacement transactions supplied by one or more
third parties that may take into account the creditworthiness of the Determining
Party at the time the quotation is provided and the terms of any relevant
documentation \dots''. The statement leaves some rooms for interpretation. ISDA
recommends to consider creditworthiness of the surviving party, but it is not
mandatory. Moreover, it is unclear whether funding costs should be considered in
the replacement transaction.
\fi

We consider two close-out conventions: \textit{clean price} and
\textit{replacement cost}. \textit{Clean price} is the classical risk-neutral
price without any adjustment. Let $P$ denote the \textit{clean price}:
\begin{align}
  P_t \coloneqq B_t\dsE\bigg(\int_{t}^{T}B_s^{-1}\df \frD_s\bigg\vert\calF_t\bigg),~~
   \text{for }t\leq T. \label{def:cleanprice}
\end{align}
\textit{Clean price}  has often been chosen
in the literature (for example,  \cite{crepey2015bilateral1,
  crepey2015bilateral2}).

By \textit{replacement cost}, we mean the price with CVA, DVA, funding rates,
and collateral. In this case, since it is unclear which funding rates should be
chosen, we assume that the replacing party has similar credit spreads to the
hedger. Furthermore, recall that $V-B^\epsilon$ is the value for calculating the
derivative price from the perspective of the hedger. Therefore, under \textit{replacement
  cost}, we set the following: 
\begin{align}
  e_{\tau}=V_{\tau-}-B^\epsilon_{\tau}. \nonumber
\end{align}
In both conventions, we assume that the collateral is proportional to
the close-out amount. More precisely, for $0\leq L^m \leq1$ (margin loss):
\begin{align}
m =(1-L^m)e. \label{form:margin}
\end{align}
Note that (\ref{form:margin}) is consistent with our financial modeling. Indeed,
if there is a $\dsG$-adapted process satisfying (\ref{selfd2}), in our 
filtration setup $\dsG$, there is an $\dsF$-adapted process $V^\dsF$, such that
  $V = \1_{\llbracket 0, \btau \llbracket}V^\dsF$. 
Therefore, under \textit{replacement cost}, the margin process:
\begin{align}
  m =& (1-L^m)e = (1-L^m)(V_- -B^\epsilon) = (1-L^m)(V^\dsF_- -B^\epsilon) \nonumber 
\end{align}
is $\dsF$-adapted before $\btau$. 
\begin{remark}
\begin{enumerate}[(i)]
\item In practice, the two close-out conventions have advantages and
  disadvantages in financial modeling.  Readers can refer to
  \cite{brigo2011close} for a comparison.
\item A similar collateral convention was discussed by
  \cite{burgard2010partial}. For BSDEs' approach to general endogenous
  collateral, readers can refer to \cite{nie2016bsde}.    
\end{enumerate}
  
\end{remark}
Prior to further argument, we provide a lemma on properties of \textit{clean
  price} $P$. The following lemma will be utilized to present an adjustment
process and simplify the representation of the amount of cash-flow at default
$\Theta$. Readers may want to recall (\ref{def:cleanprice}), the definition of
$P$, before the following lemma. The proof is reported in \Cref{app:lemmas}.
\begin{lemma}\label{lem:pf}
\begin{enumerate}[(i)]
\item $P_T = 0$.
\item $P_{\btau} = \1_{\tau \leq T}P_\tau$.
\item   $\df P_t = r_tP_t \df t +B_t(Z^P_t)^{\top}\df W_t - \df
  \frD_t$,  for $t \leq T$,  for some $Z^P \in \dsH^{2, n}_{T,  loc}$. 
\item $P_{\tau-} = P_{\tau}$ almost surely.
\end{enumerate}
\end{lemma}

\subsubsection{Incremental Adjustment Process}
\label{sec:xva}
We can remove $\frD$ in (\ref{selfd2}) by using (\rom{3}) in \Cref{lem:pf}. To
this end, we introduce an incremental adjustment process. In both close-out
conventions, we will deal with the adjustment process instead of (\ref{selfd2}).
The adjustment process is defined by the discounted difference between
$V-B^\epsilon$ and $P$. Let $X$ denote the (incremental) adjustment process:
\begin{align}
X \coloneqq B^{-1}[V-B^\epsilon -P^{\btau}]. \nonumber
\end{align}
Moreover, let $\pt \coloneqq B^{-1}P$, $\pit \coloneqq B^{-1}\pi$,
$c \coloneqq B^{-1}m$, $\Jt \coloneqq B^{-1}\Theta$ ($\Theta$ is defined in \cref{def:J}),
$\Bt^\epsilon\coloneqq B^{-1}B^\epsilon$, and
$s^i \coloneqq R^i - r, ~~i \in \{\ell, b\}$.  Note that $s^\ell$ (resp. $s^b$)
represents the lending (resp. borrowing) spread of the hedger. We can easily
verify that for $t \geq0$,
$\df \pt_{t\wedge \btau} = \1_{t \leq \btau}\df \pt_t$.  Assuming that there exists
$(V, \pi)$ satisfying (\ref{selfd2}), by applying It\^o's formula to $X$, we have
the following for $t \leq \btau$, $(X, \pit)$:
\begin{align}      \label{dyn-xva}
  \left\lbrace
  \begin{array}{r@{}l}
    \df X_t
  =& \bigg[\big(X_t + \pt_t +\Bt^\epsilon_s- c_t 
- \SUM_{i \in I \setminus \rho} \tilde{\pi}^i_t\big)^{+}s^{\ell}_t
     -\big(X_t + \pt_t +\Bt^\epsilon_s- c_t - \SUM_{i \in I \setminus \rho}\tilde{\pi}^i_t\big)^{-}s^{b}_t
     -s^\epsilon_t\Bt^\epsilon_t\bigg]\df t      \\ \vspace{0.2cm}
   &+\big[\tilde{\pi}^{\top}_t\sigma_t-(Z^P_t)^{\top}\big]\df W_t 
     - \tilde{\pi}^H_{t}\df M^H_t- \tilde{\pi}^C_{t}\df M^C_t, \\
 X_{\btau} =& \Jt_{\btau} - \pt_{\btau}.
  \end{array}
  \right.
\end{align}
 Note that under \textit{replacement cost}, $\Jt$ depends on $X$. More
 precisely: 
\begin{align}
\Jt_t = \Jt_t(X_-) =  \1_{\tau \leq t} \pt_{\tau-}+\1_{\tau=\tau^H\leq t}\big[X_{\tau-}-L^HL^m(X_{\tau-} +
  \pt_{\tau-} )^+\big] +\1_{\tau=\tau^C \leq t}\big[X_{\tau-}+L^CL^m(X_{\tau-} + \pt_{\tau-})^-
     \big],\nonumber 
\end{align}
while under \textit{clean price}, $\Jt$ is independent of $X$. 
We also define $\Theta^H(X_-)$ and  $\Theta^C(X_-)$ such  that: 
\begin{align}
  \Jt_{\btau} - \pt_{\btau} =
  -\1_{\tau =\tau^H \leq T}\Jt^H_\tau(X_{\tau-}) + \1_{\tau =\tau^C \leq  T}\Jt^C_\tau(X_{\tau-}). \nonumber 
\end{align}
where $\Jt^i \coloneqq B^{-1}\Theta^i$, $i \in \{H, C\}$. 
For example, under \textit{replacement cost}:
\begin{align}
 \Jt^H_t(X_{t-})=&-X_{t-} +L^HL^m(X_{t-} + \pt_{t-})^+, \label{def:JH}\\
 \Jt^C_t(X_{t-})=&X_{t-} +L^CL^m(X_{t-} + \pt_{t-})^-. 
\end{align}
while under \textit{clean price}, $\Jt^H_t(X_{t-})=L^HL^m\pt_{t-}^+$ and
$\Jt^C_t(X_{t-})=L^CL^m\pt_{t-}^-$.  In the case of \textit{clean price},
$\Jt^i$, $i \in \{H, C\}$, represent the loss inflicted to the hedger from breaching
of the contract.  Recall that $B$ and $P$ are independent
of $V$. Therefore, providing existence and uniqueness of the hedger's
hedging portfolio and hedging strategy, $(V, \pi)$, reduces to investigating the
BSDE of the adjustment process (\ref{dyn-xva}). Before examining the
solvability, assuming the existence and integrability, we define each component
in the incremental adjustment and provide some corresponding remarks.
\begin{definition}\label{def:xva} 
  Assume that there exists $(X, \pit$) satisfying (\ref{dyn-xva}). Then for
  $t < \btau$, we define adjustment processes: $ \emph{FCA}$, $\emph{FBA}$, $\emph{CVA}$, $\emph{DVA}$, and incremental
  funding adjustment processes: $ \emph{FBA}^\Delta$, $\emph{FCA}^\Delta$  as follows:  
\begin{align}
  \emph{FCA}_t \coloneqq
  & \dsE\bigg[\int_{t}^{\btau}\big(X_s + \pt_s +\Bt^\epsilon_s- c_s
    - \SUM_{i \in I \setminus \rho} \tilde{\pi}^i_s\big)^{-}s^{b}_s \df s\bigg\vert\calG_t\bigg],\label{fca}\\
  \emph{FBA}_t \coloneqq
  & \dsE\bigg[ \int_{t}^{\btau}\big(X_s + \pt_s +\Bt^\epsilon_s- c_s
    - \SUM_{i \in I \setminus \rho} \tilde{\pi}^i_s\big)^{+}s^{\ell}_s \df s\bigg\vert\calG_t\bigg], \label{fba}\\
  \emph{DVA}_t \coloneqq
  & \dsE\bigg[\1_{\btau = \tau=\tau^H}\Jt^H_{\tau}(X_{\tau-})\bigg\vert\calG_t\bigg],\label{dva}\\
  \emph{CVA}_t \coloneqq
  & \dsE\bigg[\1_{\btau =
    \tau=\tau^C}\Jt^C_{\tau}(X_{\tau-})\bigg\vert\calG_t\bigg], \label{cva}\\
    \calO_t \coloneqq&
    \Big[\int_{t}^{\btau}s^\epsilon_s\Bt^\epsilon_s \df s\big|\calG_t\Big], \label{opp.cost}\\
  \emph{FCA}^\Delta_t \coloneqq
  & \emph{FCA}_t - \calO_t^-,\label{fcad}\\
  \emph{FBA}^\Delta_t \coloneqq
  & \emph{FBA} - \calO_t^+, \label{fbad}
\end{align}
where \cref{fca}-\cref{opp.cost} are well-defined. In this case, we also define 
\emph{FVA} and $\emph{FVA}^\Delta$ as follows:
\begin{align}
    \emph{FVA}\coloneqq& \emph{FCA}-\emph{FBA}, \nonumber \\
  \emph{FVA}^\Delta \coloneqq& \emph{FCA}^\Delta-\emph{FBA}^\Delta. \nonumber 
\end{align}
\end{definition} 
\begin{remark}
\begin{enumerate}[(i)]
\item Assume that the local-martingales in (\ref{dyn-xva}) are true
  martingales. Then:
  \begin{align}
    X= \text{FVA}^\Delta-\text{DVA}+\text{CVA} 
    =\text{FCA}^\Delta-\text{FBA}^\Delta-\text{DVA}+\text{CVA}. \label{remark.xva.decomp}     
  \end{align}
  \iffalse
\item $\calO$ is the opportunity cost of not entering the new contract. In addition,
FCA and FBA is the aggregated funding cost and benefit together with
the \textit{legacy portfolio}. By \cref{fcad} and \cref{fbad},  
the incremental funding impacts are the differences between
aggregated funding adjustments and the opportunity cost.
\fi
\item Under \textit{replacement cost}:
  \begin{align}
    \text{FCA}^\Delta_t =
    & \dsE\bigg[\int_{t}^{\btau}\Big[\big(L^mX_s + L^m\pt_s +\Bt^\epsilon_s
      - \SUM_{i \in I \setminus \rho} \tilde{\pi}^i_s\big)^{-}s^{b}_s - (s^\epsilon_s\Bt^\epsilon_s)^-\Big] \df s\bigg\vert\calG_t\bigg],\label{replacement-fca}\\
    \text{FBA}^\Delta_t =
    & \dsE\bigg[ \int_{t}^{\btau}\Big[\big(L^mX_s + L^m\pt_s +\Bt^\epsilon_s- \SUM_{i \in I \setminus \rho} \tilde{\pi}^i_s\big)^{+}s^{\ell}_s - (s^\epsilon_s\Bt^\epsilon_s)^+\Big]\df s\bigg\vert\calG_t\bigg].      \label{replacement-fba}
  \end{align}
  It is frequently stated that there is no FVA when contracts are fully
  collateralized, \citep[e.g., see][]{cameron2013black}. To examine this, assume
  full collateralization, i.e., $L^m=0$, \textit{replacement cost}, and
  $\rho=I$. We can then see that $\FCAd = \text{FBA}^\Delta = 0$ from
  (\ref{replacement-fca}) and (\ref{replacement-fba}). Therefore, based on our
  model, $\FVAd=0$  when the close-out amount is the
  \textit{replacement cost}. However, under \textit{clean price}, $\FVAd$
  still exists even when $L^m=0$. This is one of the reasons why
  \textit{replacement cost} should be discussed.
  \iffalse
\item Because of FVA, the BSDE of (\ref{dyn-xva}) becomes semi-linear. When
  $s^\ell$ and $s^b$ are bounded, the generator is uniformly Lipschitz, so the
  existence and uniqueness are not hard to obtain. However, we need to solve the
  BSDE numerically, and this can be costly. There have been several attempts to
  obtain a closed-form solution. However, for the closed-form solution, it was
  necessary to assume $s^\ell = s^b$ so that one can recover a linear equation as
  in \cite{piterbarg2010funding, brigo2017funding, bichuch2017arbitrage}. It was
  an assumption in those papers, but we will show that we do not need such
  assumption by proving that FVA is either $\text{FCA}$ or $-\text{FBA}$, by
  obtaining either
\begin{align}
  X_t + \pt_t +\Bt^\epsilon_t-c_t- \SUM_{i \in I \setminus \rho} \tilde{\pi}^i_t \geq
  &0,~~\df \dsQ \otimes \df t~\text{a.s, ~ or} \label{fva1}\\
  X_t + \pt_t +\Bt^\epsilon_t-c_t- \SUM_{i \in I \setminus \rho} \tilde{\pi}^i_t \leq
  &0, ~~\df \dsQ \otimes\df t ~\text{a.s,} \label{fva2}
\end{align}
for many derivative contracts. Then the BSDE becomes linear regardless of the
assumption, $s^\ell = s^b$, because one of the spreads does not play any role in
solving the BSDE.
\fi
\item Considering different lending/borrowing not only makes the BSDEs for
  replication pricing semi-linear, but also makes the associated Hamiltonians
  non-smooth in optimal investment problems \citep[see][for
  examples]{bo2017portfolio, bo2016optimal, yang2019constrained}.   
\end{enumerate}
\end{remark}
In the subsequent section, we represent (\ref{dyn-xva}) as a standard form, and reduce
it to a BSDE on the reference filtration $\dsF$. 
\subsection{Filtration Reduction}
\label{sec:bsde}
To represent $(X, \pit)$ in \cref{dyn-xva} as a solution of a BSDE, we need to
specify its generator. To achieve this, let us  put $Z\coloneqq
\pit^\top\sigma - Z^P$, and consider a map $(\phi_t)_{t\geq0}$ such that
 for any $t \geq 0 $:
\begin{align}
\phi_t(X_t, Z_t) = \SUM_{i \in I \setminus \rho}\tilde{\pi}^i_t. \nonumber    
\end{align}
The form of $\phi$ varies depending on the considered financial markets because
$\sigma$ may not be invertable. As will be shown in \Cref{example:generator} and
\Cref{lemma:reduction}, $\phi$ is linear in $Z, Z^P, \pit^H, \pit^C$, but
$\tilde{\pi}^i$, $i \in \{H, C\}$, depend on a solution of associated BSDEs
under \textit{replacement cost}.  Therefore, for simplicity, we assume the following:
\begin{assumption}\label{assm:generator}
  There exists an $n$-dimensional $\dsF$-progressively measurable process
  $\alpha$ such that:
 \begin{align}
   \phi_t(z) = \alpha^\top_t(z +  Z^P_t).\nonumber 
 \end{align}
\end{assumption}
Dependence on $\pit^i$, $i\in \{H, C\}$ makes the calculation
more complicated. In \Cref{app:sto.int}, we report one example in which 
\Cref{assm:generator} is not satisfied. Now, by the following examples, we will
demonstrate that \Cref{assm:generator} does not impose a  strong restriction on our
considered financial markets.
\begin{example}\label{example:generator}
  (\rom{1}) If $\rho = I$, we should trivially set 
  $\phi_t (z)= 0$.
  
(\rom{2}) Consider $n=1$, and constant parameters. Then $S^1,~S^H,~S^C $ follow:
\begin{align}
  \df S^1_t =& r S^1_t \df t+\sigma^1 S^1_t\df W_t, \nonumber\\
  \df S^i_t =& r S^i_t \df t - S_{t-}\df M\IT,~~i \in \{ H, C\}. \nonumber
\end{align}
It then follows that
 $\sum_{i \in I}\tilde{\pi}^i\sigma^i = \pit^1 \sigma^1$.
When $\rho = \{H, C\}$:
\begin{align}
  \phi_t(z)=  (\sigma^1)^{-1}(z + Z^P_t) = (\sigma^1)^{-1}z +
  (\sigma^1)^{-1}Z^P_t.  \label{trans1} 
\end{align}
On the other hand, when $\rho =\{1, C\}$, we have $\sum_{i \in I\setminus\rho}\pit^i = \pit^H$. Therefore,
  $\phi_t \colon z \to   \pit^H_t$. 
This is the same case discussed by \cite{burgard2010partial}. 

(\rom{3}) Let us assume that the OIS rate is an $\dsF$-adapted
process. In addition, we assume that for any $i \in \{H, C\}$,
$(G^i_t)_{t \geq0}$ is given by
    $\df G\IT = -h\IT G\IT \df t$,    
  where $(h\IT)_{t\geq0}$ are deterministic processes.  We consider non-defaultable
  and defaultable zero coupon bonds with the same maturity  $T$, i.e.,
  $S^1, ~S^H, ~S^C$ are defined as $ S^1_t \coloneqq  B_t\dsE[
  B_T^{-1}\mid\calF_t]$ and $S^i_t \coloneqq  B_t\dsE[ \1_{\tau^i > T }B_T^{-1}\mid\calG_t], ~~i \in \{H, C\}$. 
By \Cref{lem:red} in \Cref{app:red},
$ S^i_t = \1_{t < \tau^i}B_t(G^i_t)^{-1}\dsE[ G^i_TB_T^{-1}|\calF_t] $.  Since
$(G^i_t)_{t \geq0}$, $i \in \{H, C\}$, are deterministic,
  $S^i_t = \1_{t < \tau^i}(G^i_t)^{-1} G^i_TS^1_t$. 
It follows that for $t < \tau$, $\sigma^1 = \sigma^H = \sigma^C$. Recall $\Sigma= [(\sigma^1)^\top \cdots
(\sigma^n)^\top]^\top$ and $\sigma = [\Sigma^\top~ (\sigma^H)^\top~ (\sigma^C)^\top]^{\top}$. Then,
  $(\Sigma^\top)^{-1}\sigma^\top\pit = 
  [\pit^1+\pit^H+\pit^C, ~ \pit^2 , ~\cdots \pit^n ]^\top$.
Therefore, $\1^\top(\Sigma^\top)^{-1}\sigma^\top\pit = \sum_{i \in I}\pit$, where
$\1 \coloneqq (1, \dots,1)^\top \in\dsR^n$. Thus, if $\rho = \emptyset$:
\begin{align}
 \phi_t\colon z \to  \1^\top(\Sigma^\top_t)^{-1}(z + Z^P_t), \label{trans3}
\end{align}
Now, consider $\rho \not=\emptyset $ and  define $\1^\rho \in \dsR^n$ by:
\begin{align}
  (\1^\rho)_i\coloneqq
\left\lbrace
  \begin{array}{r@{}l}
    0 ~~ & i \in \rho,\\
    1~~  & i\notin \rho,
  \end{array}
  \right. \nonumber  
\end{align}
where $(\1^\rho)_i$ denotes the $i$-th component of $\1^\rho$. When
$\rho\cap \{1, H, C\} = \emptyset$,
$\phi_t\colon z \to (\1^\rho)^\top(\Sigma^\top_t)^{-1}(z + Z^P_t)$. However, if $\rho = \{1\}$,
  $\phi_t\colon z \to  (\1^\rho)^\top(\Sigma^\top_t)^{-1}(z + Z^P_t) +\pit^H + \pit^C$.  \qed
\end{example}
\iffalse
\begin{remark}\label{remark:repo}
  Note that we do not exclude the convention commonly used in the literature
  that  
  assets are traded from repo markets, i.e., $I = \rho$. In this case, we can set
  $\alpha =0$ in \Cref{assm:generator}.
\end{remark}
\fi
Now, we denote the generator of (\ref{dyn-xva}) by $g^\dsG$:
\begin{align}
  g^{\dsG}_t(y, z) \coloneqq
  -\big(y + \pt_t +\Bt^\epsilon_t- c_t - \phi_t(z)\big)^{+}s^{\ell}_t
  +\big(y + \pt_t +\Bt^\epsilon_t- c_t - \phi_t(z)\big)^{-}s^{b}_t +s^\epsilon_t\Bt^\epsilon_t. \nonumber 
\end{align}
Let $(Y^{\dsG}, Z^{\dsG}, \pit^H, \pit^C)$ denote the solution, in  certain
spaces, of the following BSDE:
\begin{align}
  Y^{\dsG}_t =&-\1_{\btau=\tau=\tau^H}\Jt^{H}_{\tau}(Y^{\dsG}_{\tau-})
                +\1_{\btau=\tau=\tau^C}\Jt^{C}_{\tau}(Y^{\dsG}_{\tau-})
                \nonumber\\
              &  +\int_{t}^{\btau}g^{\dsG}_s( Y^{\dsG}_s, Z^{\dsG}_s) \df s
                - \int_{t}^{\btau}(Z^{\dsG}_s)^{\top}\df W_s
                +\SUM_{i =H, C}\int_{t}^{\btau} \tilde{\pi}^i_s\df M^i_s. \label{GBSDE}
\end{align}
Then $(Y^{\dsG}, Z^{\dsG}, \pit^H, \pit^C)$ provides $(X, \pit)$ as well as
$(V, \pi)$. Instead of dealing directly with (\ref{GBSDE}), however, we will
investigate a reduced BSDE on the reference filtration $\dsF$. The idea is as
follows:

It is an established fact that, in the progressively enlarged filtration $\dsG$, any
$\dsG$-optional (resp. predictable) process has an $\dsF$-optional
(resp. predictable) reduction. Therefore, if there exists a solution of
(\ref{GBSDE}) such that $Y^\dsG$ is $\dsG$-optional and $Z^\dsG$ is
$\dsG$-predictable, there exists an $\dsF$-adapted pair $(Y^\dsF, Z^\dsF)$
that satisfies the following:
\begin{align}
 Y^\dsG =& \1_{t< \btau}Y^\dsF, ~~~~~Z^\dsG = \1_{t \leq \btau}Z^\dsF.\label{yred}  
\end{align}
Moreover, we surmise that $(Y^\dsF, Z^\dsF)$ is a solution of a BSDE on the
reference filtration $\dsF$, i.e., we will determine $g^\dsF \colon \Omega \times [0, T] \times
\dsR^{n+1}\to \dsR$ such that:
\begin{align}
 Y^{\dsF}_t = \int_{t}^{T}g^{\dsF}_s(Y^{\dsF}_s, Z^{\dsF}_s) \df s - \int_{t}^{T}(Z^{\dsF}_s)^{\top}
  \df W_s. \label{FBSDE}
\end{align}
Then, by the terminal condition
$Y^\dsG_{\btau} =
-\1_{\btau=\tau=\tau^H}\Jt^{  H}_{\tau}(Y^{\dsG}_{\tau-})+\1_{\btau=\tau=\tau^C}\Jt^{ C}_{\tau}(Y^{\dsG}_{\tau-})$,
together with (\ref{yred}), 
we can expect that:
\begin{align}
  Y^{\dsG}_t = \1_{t <\btau}Y^{\dsF}_t + \1_{t \geq \btau}
  \big[-\1_{\btau=\tau=\tau^H}\Jt^{ H}_{\tau}(Y^{\dsF}_{\tau-})+\1_{\btau=\tau=\tau^C}\Jt^{ C}_{\tau}(Y^{\dsF}_{\tau-})
  \big]. \label{eq:fg}
\end{align}
By applying It\^o's formula to (\ref{eq:fg}), we can determine 
$g^\dsF$. The next lemma explains the corresponding  details. We report
the proof in \Cref{app:lemmas}. By the next lemma, $(Y^\dsG, Z^\dsG)$ can be
obtained by $(Y^\dsF, Z^\dsF)$. Then, $(X, \pi)$ and $(V, \pi)$ can be found by $(Y^\dsG, Z^\dsG)$.
\begin{lemma}\label{lemma:reduction}
Assume that there exists a unique solution $(Y^\dsF, Z^\dsF) \in \dsS^{2}_T \times \dsH^{2, n}_T$ of the following BSDE:
\begin{align}
  &Y^{\dsF}_t = \int_{t}^{T}g^{\dsF}_s(Y^{\dsF}_s, Z^{\dsF}_s) \df s - \int_{t}^{T}(Z^{\dsF}_s)^{\top}
                \df W_s,    \label{FBSDE2} \\
 &g^\dsF_t(y, z) \coloneqq g^\dsG_t(y, z) - h^H_t\Jt^{H}_t(y) +
                           h^C_t\Jt^{C}_t(y)-h_ty.
                           \label{fgenerator}
\end{align}
Then $(Y^\dsG, Z^\dsG)$ can solve (\ref{GBSDE}) by the following relationships:
\begin{align}
  Y^{\dsG}_t =
  & \1_{t <\btau}Y^{\dsF}_t + \1_{t \geq \btau} \big[-\1_{\btau=\tau=\tau^H}
    \Jt^{H}_{\tau}(Y^{\dsF}_{\tau-})+\1_{\btau=\tau=\tau^C}\Jt^{C}_{\tau}(Y^{\dsF}_{\tau-}) \big], \label{fsol1}\\
  Z^\dsG_t =& \1_{t < \btau}Z^\dsF_t,\\
  \pit^H_t =& Y^\dsF_t + \Jt^H_t(Y^\dsF_{t-}), ~~~\pit^C_t = Y^\dsF_t -
              \Jt^C_t(Y^\dsF_{t-})\label{fsol3} 
 \end{align}
Furthermore, if (H)-hypothesis holds between $\dsF$ and $\dsG$, i.e., any  $(\dsF,
\dsQ)$-martingale is a $(\dsG, \dsQ)$-martingale, then (\ref{fsol1})-(\ref{fsol3}) is
the only solution of (\ref{GBSDE}).
\begin{remark}
This reduction argument was also used by \cite{bichuch2017arbitrage,
  crepey2015bilateral1, crepey2015bilateral2, brigo2016analysis}.   
\end{remark}
\end{lemma}
Nevertheless, due to the semi-linearity in \cref{FBSDE2} and \cref{fgenerator},
we need to solve (\ref{FBSDE}) numerically, and thus one may be interested in
how much offset between FCA and FBA can be anticipated. The answer is negative
according to our main results given in the next section. It will be shown that
the funding impact is binary, i.e., either FCA$=0$ or FBA$=0$, and the binary
impact is determined by the monotone property of the payoff function.  The
result is also related to the conundrum regarding FBA/DVA, i.e., whether FBA
double-counts DVA \cite[see][]{cameron2013black}.  However, our results inform us
that FBA and DVA are affected by different mathematical structures of the
derivative contract. More precisely, DVA occurs where the bank has liabilities,
i.e., DVA concerns the sign of the payoff function, while FBA concerns
the monotone property. A detailed explanation of this is provided in
\Cref{main.remark}.

\section{Main Results} 
\label{sec:main}
In this section, we demonstrate the binary nature of FVA: FVA is either FCA or -FBA. In
other words, either $\text{FCA}=0$ or $\text{FBA}=0$. This switching property of
FVA is determined by certain properties of payoff structures of the derivative
contract. Before the main theorems, readers may want to refer to \Cref{sec:idea}
where the idea for the proof is explained with an example.
 
We consider deterministic default intensities, volatility, and funding spreads,
but $(r_t)_{t\geq0}$ does not need to be deterministic so that we can apply the
result to cases in which $(r_t)_{t\geq0}$ is a general $\dsF$-adapted
process. These assumptions are made primarily for simplicity. One case of
stochastic default intensities is reported in \Cref{app:sto.int}.  We also 
consider derivatives of European style.  For derivatives that have cash-flows at
multiple times, we can divide the interval $[0, T]$ according to the time of
cash-flows. For example, if $\frD_t = \sum_{i=1}^N\1_{T_i \leq t}\xi^i$, for
$t \in (T_{i-1}, T_i)$, we consider the following BSDE:
\begin{align}
  Y^\dsF_t = Y^\dsF_{T_i} + \int_{t}^{T_i}g^\dsF_s(Y^\dsF_s , Z^\dsF_s) \df s
  +\int_{t}^{T_i}(Z^\dsF_s)^\top \df W_s. \label{divide}
\end{align}
Then we can apply the next theorems for each BSDE (\ref{divide}). 
 We start from \textit{clean
  price}. 
\begin{theorem}[\textit{Clean price}]\label{thm:clean}
We assume  that $h^H$, $h^C$, $s^\ell$, $s^b$ are both deterministic and
bounded. Moreover, $\alpha$ is also deterministic and $\alpha s^\ell$, $\alpha s^b$ are
bounded. We consider \textit{clean price} as $e = P$, and
$B^{-1}\frD = \1_{\llbracket T, \infty\llbracket}\xi$. In addition, we assume the following:
\begin{eqnarray}
  & \xi \in\dsL_T^2 \cap\dsD^{1, 2},~~ D_\theta\xi \in (\dsD^{1,2})^n~~\text{for any } \theta \leq T,
    \label{thm1:int1}\\
  &\dsE\bigg[\int_{0}^{T}\big|D_\theta \xi\big|^2\df \theta\bigg] <\infty, \label{thm1:int2}\\
  &\dsE\bigg[\int_{0}^{T}\int_{0}^{T}\big|D_t(D_\theta \xi)\big|^2\df \theta \df t\bigg] <\infty.\label{thm1:int3}
\end{eqnarray}
We assume that $\1_{\pt=0} = 0$, $\df \dsQ \otimes \df t$ a.s. Let:
\begin{align}
\xi^b \coloneqq& s^b(1-L^m) \alpha^\top Z^P  +h\Bt^\epsilon+(h-h^HL^HL^m)\pt^+
  -(h-h^CL^CL^m)\pt^-, \nonumber \\
\xi^\ell \coloneqq& s^\ell(1-L^m) \alpha^\top Z^P  +h\Bt^\epsilon+(h-h^HL^HL^m)\pt^+
  -(h-h^CL^CL^m)\pt^-.\nonumber 
\end{align}

(\rom{1}) Assume that for any $\theta \leq T$:
\begin{align}
    L^m(\xi - \alpha_\theta^\top D_\theta\xi) + \Bt^\epsilon_T \leq&0, ~~\text{a.s},\label{thm1:cx}\\
  \alpha^\top Z^P \geq&0, ~~\dqdt,   \label{thm1:czf}\\
  \xi^b - \alpha_\theta^\top D_\theta\xi^b \leq&0, ~~\dqdt,\label{thm1:cxb}
\end{align}
then there exists $(Y^\#, Z^\#) \in \dsS^2_T \times\dsH^{2, n}_T$ that satisfies:
\begin{align}
  Y^\#_t =& \int_{t}^{T} -\bigg[\Big(Y^\#_s +L^m\pt_s+\Bt^\epsilon_s-\phi_s(Z^\#_s)\Big)s^{b}_s-s^\epsilon_s\Bt^\epsilon_s+h_sY^\#_s\bigg]\df s\nonumber\\
  &+\int_{t}^{T}
    -\big[\1_{\pt_s\geq0}h^H_sL^HL^m+\1_{\pt_s<0}h^C_sL^CL^m\big]\pt_s\df s
    -\int_{t}^{T}(Z^\#_s)^\top\df W_s, \nonumber 
\end{align}
and $(Y^\#, Z^\#) = (Y^\dsF, Z^\dsF)$. In particular, \emph{FBA}=$0$.

(\rom{2}) Assume that for any $\theta \leq T$:
\begin{align}
    L^m(\xi - \alpha_\theta^\top D_\theta\xi) +\Bt^\epsilon_T \geq&0,~~\text{a.s,}\label{thm:cx}\\
  \alpha^\top Z^P \leq&0, ~~\dqdt,   \label{thm:czf2}\\
  \xi^\ell - \alpha_\theta^\top D_\theta\xi^\ell \geq&0,~~\dqdt, \label{thm:cxl}
\end{align}
then there exists $(Y^\#, Z^\#) \in \dsS^2_T \times\dsH^{2, n}_T$ that satisfies:
\begin{align}
  Y^\#_t =& \int_{t}^{T} -\bigg[\Big(Y^\#_s +L^m\pt_s+\Bt^\epsilon_s-\phi_s(Z^\#_s)\Big)s^{\ell}_s-s^\epsilon_s\Bt^\epsilon_s+h_sY^\#_s\bigg]\df s\nonumber\\
  &+\int_{t}^{T}
    -\big[\1_{\pt_s\geq0}h^H_sL^HL^m+\1_{\pt_s<0}h^C_sL^CL^m\big]\pt_s\df s
    -\int_{t}^{T}(Z^\#_s)^\top\df W_s, \nonumber 
\end{align}
and $(Y^\#, Z^\#) = (Y^\dsF, Z^\dsF)$. In particular, $\emph{FCA}=0$.

(\rom{3}) If the contract is un-collateralized, i.e., $L^m = 1$, the results 
in (\rom{1}) and  (\rom{2}) hold without (\ref{thm1:czf}) and (\ref{thm:czf2}).
\iffalse
(\rom{4}) Assume full collateralization, i.e., $L^m=0$. Then $h^i$, $i \in \{H,
C\}$ do  not need to be deterministic, and   (\ref{thm1:czf})
and (\ref{thm:czf2}) are not required. Instead, in (\rom{1}), we need
\begin{align}
  \alpha_\theta^\top (D_\theta h_t) \xi \geq0,\nonumber 
\end{align}
and in (\rom{2}), we need $\alpha_\theta^\top (D_\theta h_t) \xi \leq0$ for any $\theta \leq t$.
\fi
\end{theorem}
\begin{remark}\label{main.remark}
  \begin{enumerate}[(i)]
  \item The conditions \cref{thm1:cx}-\cref{thm:cxl} are closely related to the
    monotone property of $\xi$, and this shows us how FBA and DVA are dissimilar.
    DVA is a benefit from the possibility that the bank may default on its
    derivative payables. Therefore, DVA occurs when $\xi \geq0$. On the other hand, FBA
    occurs when \cref{thm:cx}-\cref{thm:cxl} are satisfied. To elucidate the
    meaning of the inequalities, we consider the same financial market as in
    \Cref{example:idea}. In addition, let $\xi = \psi(\St^1_T)$ for some smooth
    function $\psi \colon \dsR \to \dsR_+$. Then, \cref{thm:czf2} becomes
    $ \psi'(\St^1_T) \St^1_T \leq0$, which means that $\psi$ is non-increasing. This also
    implies that:
  \begin{align}
    \xi -\alpha_\theta^\top D_\theta \xi = \psi(\St^1_T) - \psi'(\St^1_T) \St^1_T \geq0. \label{main.remark:cond1}
  \end{align}
  The last inequality \cref{thm:cxl} is also satisfied under a similar monotone
  property.
  This will be later shown with examples. 
\item One reason for believing that FBA double-counts DVA may be the
  convention in the literature that all assets can be traded in repo markets, i.e.,
  $\rho = I$. Recall that $\alpha = 0$ when $\rho = I$ (recall (\rom{1}) in
  \Cref{example:generator}). When $\alpha = 0$, by \cref{main.remark:cond1}, FBA also
  occurs when $\xi \geq0$. However, the existence of repo markets is a strong
  condition since we can not guarantee that repo markets are always available.
  \end{enumerate}
\end{remark}
\begin{proof}[Proof of Theorem \ref{thm:clean}]We only prove
  (\rom{1}). (\rom{2}) can be proved similarly, and (\rom{3}) is an obvious
  consequence of (\rom{1}) and (\rom{2}).
  
(\rom{1}) By (\ref{thm1:int1}) and (\ref{thm1:int2}):
\begin{align}
  \dsE\bigg[ \int_{0}^{T}\big|\pt _t\big|^2\df t\bigg]
  & \leq
     \int_{0}^{T}\dsE[|\xi|^2]\df t<\infty,\nonumber \\
  \dsE\bigg[ \int_{0}^{T}\big|Z _t\big|^2\df t\bigg]
  & \leq
     \int_{0}^{T}\dsE[|D_t\xi|^2]\df t<\infty.\nonumber 
\end{align}
One can easily  see that there exists a unique solution $(Y^\#,Z^\#) \in \dsS^2_T
\times \dsH^{2, n}_T$ of the following BSDE:
\begin{align}
  Y^\#_t =& \int_{t}^{T}g^\#_s(Y^\#_s, Z^\#_s)\df s    -\int_{t}^{T}(Z^\#_s)^\top\df W_s,
            \nonumber \\
  g^\#_t(y, z):=&-\big(y +L^m\pt_t+\Bt^\epsilon_s-\phi_t(z)\big)s^{b}_t
  +s^\epsilon_s\Bt^\epsilon_s
  -h_ty  -h^H_tL^HL^m(\pt_t)^++h^C_tL^CL^m(\pt_t)^-. \nonumber 
\end{align}
We will show that:
\begin{align}
(Y^\#, Z^\#) = (Y^\dsF, Z^\dsF), \label{thm1:purpose}
\end{align}
where $(Y^\dsF, Z^\dsF)$ is the solution of (\ref{FBSDE2}). Because $(Y^\dsF,
Z^\dsF)$ satisfying (\ref{FBSDE}) is unique in $\dsS^2_T \times \dsH^{2, n}_T$, to prove
(\ref{thm1:purpose}), it suffices to show that:
\begin{align}
 Y^\# + L^m\pt +\Bt^\epsilon- \phi(Z^\#) \leq0 , ~~~\dqdt.  \label{thm1:ineq-s}
\end{align}
To this end, we
introduce another transformation:
\begin{align}
  V^\dsF \coloneqq Y^\# + L^m\pt+\Bt^\epsilon,
  ~~~~\Pi^\dsF \coloneqq Z^\# + L^mZ^P. \nonumber 
\end{align}
Then, by (\rom{3}) in \Cref{lem:pf}, $(V^\dsF, \Pi^\dsF)$ satisfies the following:
\begin{align}
  V^\dsF = L^m\xi +\Bt^\epsilon_T +\int_{t}^{T}F_s(V^\dsF_s, \Pi^\dsF_s)\df s
  - \int_{t}^{T}(\Pi^\dsF_s)^\top\df W_s,\nonumber 
\end{align}
where:
\begin{align}
  F_t(y, z) \coloneqq
  &  g^{\#}_t(y - L^m\pt_t-\Bt^\epsilon, z -  L^mZ^P_t)\nonumber\\
  =&-(y - \alpha^\top_tz)s^b_t - h_ty\nonumber\\
  &+ (1-L^m)s^b_t\alpha^\top_t Z^P_t 
    +h_t\Bt^\epsilon_t+(h_t-h^H_tL^HL^m)(\pt_t)^+-(h_t-h^C_tL^CL^m)(\pt_t)^- \nonumber\\
  =&-(y - \alpha^\top_tz)s^b_t- h_ty + \xi^b_t.\nonumber 
\end{align}
Note that (\ref{thm1:ineq-s}) is equivalent to:
\begin{align}
  V^\dsF - (1-L^m)s^b\alpha^\top Z^P  - \phi(\Pi^\dsF)\leq 0, ~~~\dqdt. 
  \label{thm1:ineq-ss}
\end{align}
To show (\ref{thm1:ineq-ss}), we use Malliavin calculus and the comparison principle
of BSDEs.

By (\ref{thm1:int2}) and (\ref{thm1:int3}):
\begin{align}
  \dsE\bigg[ \int_{0}^{T}\int_{0}^{T}\big|D_\theta\pt_t\big|^2\df t \df \theta\bigg]
  &\leq
     \int_{0}^{T}\int_{0}^{T}\dsE|D_\theta\xi|^2\df t \df \theta<\infty,\nonumber \\
  \dsE\bigg[ \int_{0}^{T}\int_{0}^{T}\big|D_\theta Z^P_t\big|^2\df t \df \theta\bigg]
  &\leq
     \int_{0}^{T}\int_{0}^{T}\dsE|D_\theta(D_t\xi)|^2\df t \df \theta<\infty.\nonumber 
\end{align}
Therefore, by Proposition 5.3 in \cite{el1997backward}, $(V^\dsF, \Pi^\dsF) \in
\dsL^2([0, T]:  \dsD^{1,2}\times (\dsD^{1,2})^n)$, and for any $1\leq i \leq n$, a
version of $\{(D^i_\theta V^\dsF_t, D^i_\theta \Pi^\dsF_t)| ~0\leq\theta, t \leq T\}$ is given by the following:
\begin{align}
  D^i_\theta V^\dsF_t=
  & L^mD^i_\theta \xi + \int_{t}^{T}\Big[ -(D^i_\theta V^\dsF_s
    - \alpha^\top_s D^i_\theta \Pi^\dsF_s)s^b_s -h_s D^i_\theta V^\dsF_s
    +D^i_\theta \xi^b_s \Big] \df s -\int_{t}^{T}(D^i_\theta \Pi^\dsF_s)^\top \df W_s, 
\end{align}
and $\{D_tV^\dsF_t\colon~0\leq t\leq T\}$ is a version of $\{\Pi^\dsF_t\colon~0\leq t\leq T\}$.
Let us denote:
\begin{align}
  V^\dsF_{t, \theta}\coloneqq& \alpha ^\top _\theta (D_\theta V^\dsF_t), ~~~~
  \Pi^\dsF_{t, \theta}\coloneqq (D_\theta \Pi^\dsF_t)\alpha_\theta.\nonumber 
 % F_{t, \theta}(y, z) \coloneqq&  -(y - \alpha^\top_t z)s^b_t -h_t y +\alpha^\top_\theta (D_\theta \xi^b_t) .
\end{align}
Then $(V^\dsF_{t, \theta}, \Pi^\dsF_{t, \theta})$ is given by:
\begin{align}
V^\dsF_{t, \theta} = L^m\alpha _\theta D_\theta \xi+\int_{t}^{T}\Big[F_s(V^\dsF_{s,
  \theta},\Pi^\dsF_{s,\theta})
  +\alpha^\top_\theta D_\theta \xi^b_s - \xi^b_s  \Big]\df s - \int_{t}^{T}(\Pi^\dsF_{s,\theta})^\top\df
  W_s.\nonumber 
\end{align}
Therefore, by (\ref{thm1:cx}) and (\ref{thm1:cxb}) together with the comparison
principle of BSDEs, we determine that $V^\dsF_{t, \theta} \geq V^\dsF_t$ for any $\theta \leq
t$. Moreover, by (\ref{thm1:czf}), for any $\theta \leq t$:
\begin{align}
 V^\dsF_t  -(1-L^m)s^b_tZ^P_t-V^\dsF_{t, \theta}\leq V^\dsF_t  -V^\dsF_{t, \theta}
  \leq0,\nonumber 
\end{align}
which implies (\ref{thm1:ineq-s}) and (\ref{thm1:ineq-ss}). Furthermore, by
\cref{fba}, FBA$=0$. 
\end{proof}
We now turn to the cases of \textit{replacement cost}. Recall that when we
consider \textit{replacement cost}, $\Jt^{ i}$, $i \in \{H, C\}$, depend on
$Y^\dsF$. However, $\Jt^{i}(y)$ is not differentiable in $y$. We can avoid
this irregularity by considering contracts such that either $\pt \geq0$ or
$\pt \leq0$, $\df \dsQ \otimes \df t$ a.s, i.e., options.
\begin{theorem}[\textit{Replacement cost}] \label{thm:replacement} We
  assume that $h^H$, $h^C$, $s^\ell$, $s^b$ are both deterministic and bounded.  Moreover,
  $\alpha$ is also deterministic and $\alpha s^\ell$, $\alpha s^b$ are bounded. We consider
  \textit{replacement cost} as $e = V_- -B^\epsilon = B Y^\dsF + P$, and
  $B^{-1}\frD = \1_{\llbracket T, \infty\llbracket}\xi$. In addition, we assume the following:
\begin{eqnarray}
  & \xi \in\dsL_T^2\cap \dsD^{1, 2}, \label{thm2:int1}\\%~~ D_\theta\xi \in (\dsD^{1,2})^n~\forall \theta \leq T,
  &\dsE\bigg[\int_{0}^{T}\big|D_\theta \xi\big|^2\df \theta\bigg] <\infty, \label{thm2:int2}
  %&\dsE\bigg[\int_{0}^{T}\int_{0}^{T}\big|D_t(D_\theta \xi)\big|^2\df \theta \df t\bigg] <\infty.\label{thm2:int3}
\end{eqnarray}
and that either $\xi \geq0$ or $\xi \leq0$ a.s, i.e., we consider options.

(\rom{1}) Let  $\epsilon \leq0$ and for any $\theta \leq T$:
\begin{align}
  L^m\xi - \alpha_\theta^\top D_\theta\xi \leq0, ~~\text{a.s.} \label{thm2:condition}
\end{align}
If $ \xi \geq0 $ a.s, then there exists a solution
$(Y^\#, Z^\#)\in \dsS^2_T \times \dsH^{2, n}_T$ that satisfies:
\begin{align}
  Y^\#_t =& \int_{t}^{T} -\bigg[\Big(L^mY^\#_s +L^m\pt_s+\Bt^\epsilon_s-\phi_s(Z^\#_s)\Big)s^{b}_s-s^\epsilon_s\Bt^\epsilon_s\bigg]\df s\nonumber\\
  &+\int_{t}^{T}
    \big[-h^H_sL^HL^m(Y^\#_s+\pt_s)\big]\df s
    -\int_{t}^{T}(Z^\#_s)^\top\df W_s, \nonumber 
\end{align}
and $(Y^\#, Z^\#)=(Y^\dsF, Z^\dsF)$. On the other hand, if $\xi \leq0 $ a.s, then
there exists a solution $(Y^\#, Z^\#)\in \dsS^2_T \times \dsH^{2, n}_T$ that satisfies:
\begin{align}
  Y^\#_t =& \int_{t}^{T} -\bigg[\Big(L^mY^\#_s +L^m\pt_s+\Bt^\epsilon_s-\phi_s(Z^\#_s)\Big)s^{b}_s-s^\epsilon_s\Bt^\epsilon_s\bigg]\df s\nonumber\\
          &+\int_{t}^{T}
            \big[-h^C_sL^CL^m(Y^\#_s+\pt_s)
            \big]\df s
            -\int_{t}^{T}(Z^\#_s)^\top\df W_s, \nonumber 
\end{align}
and $(Y^\#, Z^\#)=(Y^\dsF, Z^\dsF)$. In particular, for both cases, $\emph{FBA}=0$. 

(\rom{2}) Assume that $\epsilon \geq0$ and for any $\theta \leq T$:
\begin{align}
  L^m\xi - \alpha_\theta^\top D_\theta\xi \geq0, ~~\text{a.s.} \label{thm2:cond}
\end{align}
If $\xi \geq0 $ a.s, then there exists a solution
$(Y^\#, Z^\#)\in \dsS^2_T \times \dsH^{2, n}_T$ that satisfies:
\begin{align}
  Y^\#_t =& \int_{t}^{T} -\bigg[\Big(L^mY^\#_s +L^m\pt_s+\Bt^\epsilon_s-\phi_s(Z^\#_s)\Big)s^{\ell}_s-s^\epsilon_s\Bt^\epsilon_s\bigg]\df s\nonumber\\
  &+\int_{t}^{T}
    \big[-h^H_sL^HL^m(Y^\#_s+\pt_s)\big]\df s
    -\int_{t}^{T}(Z^\#_s)^\top\df W_s, \nonumber 
\end{align}
and $(Y^\#, Z^\#)=(Y^\dsF, Z^\dsF)$. On the other hand, if $\xi \leq0 $ a.s, then there exists a solution $(Y^\#, Z^\#)\in \dsS^2_T \times \dsH^{2, n}_T$ that satisfies:
\begin{align}
  Y^\#_t =& \int_{t}^{T} -\bigg[\Big(L^mY^\#_s +L^m\pt_s+\Bt^\epsilon_s-\phi_s(Z^\#_s)\Big)s^{\ell}_s-s^\epsilon_s\Bt^\epsilon_s\bigg]\df s\nonumber\\
  &+\int_{t}^{T}
    \big[-h^C_sL^CL^m(Y^\#_s+\pt_s)\big]\df s
    -\int_{t}^{T}(Z^\#_s)^\top\df W_s, \nonumber 
\end{align}
and $(Y^\#, Z^\#)=(Y^\dsF, Z^\dsF)$. In particular, for both cases, $\emph{FCA}=0$. 
\begin{proof}
  % [Proof of Theorem \ref{thm:replacement}]
  As in the proof of Theorem \ref{thm:clean}, we can check the existence,
  uniqueness, and Malliavin differentiability of BSDEs. We only explain the
  transformation and how to apply the comparison principle. Without a loss of
  generality, we assume $\xi \geq0$. It follows that $\pt \geq0$, $\dqdt$.

(\rom{1}) We consider  a solution $(Y^\#, Z^\#) \in \dsS^2_T \times \dsH^{2, n}_T$  of the following BSDE:
\begin{align}
  Y^\#_t
  =& \int_{t}^{T} -\bigg[\big(L^mY^\#_s+L^m\pt_s +\Bt^\epsilon_s-
     \alpha^\top_s(Z^\#_s+Z^P_s)\big)s^{b}_s - s^\epsilon_s\Bt^\epsilon_s
     +h^H_sL^HL^m(Y^\#_s+\pt_s)\bigg]\df s\nonumber\\
   & -\int_{t}^{T}(Z^\#_s)^\top\df W_s.    \nonumber 
\end{align}
Take $V^\dsF \coloneqq Y^\# + \pt $, $\Pi^\dsF\coloneqq Z^\# + Z^P$. Then:
\begin{align}
  V^\dsF_t
  =& \xi+\int_{t}^{T} F_s(V^\dsF_s, \Pi^\dsF_s)\df s
    -\int_{t}^{T}(\Pi^\dsF_s)^\top\df W_s, \label{thm2:bsdes}
\end{align}
where:
\begin{align}
  \label{eq:3}
  F_t(y, z)\coloneqq -(L^my - \alpha_s^\top z)s^b  -h^H_tL^HL^my
\end{align}
Since $(0,0)$ is the unique solution of the following BSDE:
\begin{align}
 y_t
  =& \int_{t}^{T} -\bigg[\big(L^m_sy_s - \alpha_s^\top z_s\big)s^{b}_s
     +h^H_sL^HL^my_s\bigg]\df s
    -\int_{t}^{T}z^\top_s\df W_s, \label{thm2:bsdesmall}
\end{align}
by comparison between (\ref{thm2:bsdes}) and (\ref{thm2:bsdesmall}), we have
that $V^\dsF \geq0$, i.e.:
\begin{align}
 Y^\# + \pt \geq0. \label{positive}  
\end{align}
Moreover, $(V^\dsF, \Pi^\dsF) \in
\dsL^2([0, T]\colon  \dsD^{1,2}\times (\dsD^{1,2})^n)$, and for any $1\leq i \leq n$, a
version of $\{(D^i_\theta V^\dsF_t, D^i_\theta \Pi^\dsF_t)| ~0\leq\theta, t \leq T\}$ is given by the following:
\begin{align}
  D^i_\theta V^\dsF_t=
   D^i_\theta \xi + \int_{t}^{T}\Big[ -(L^mD^i_\theta V^\dsF_s
    - \alpha^\top_s D^i_\theta \Pi^\dsF_s)s^b_s -h^H_sL^HL^m D_{\theta}^iV^\dsF_s
    \Big] \df s -\int_{t}^{T}(D^i_\theta \Pi^\dsF_s)^\top \df W_s, \nonumber 
\end{align}
and $\{D_tV^\dsF_t\colon~0\leq t\leq T\}$ is a version of $\{\Pi^\dsF_t\colon~0\leq t\leq
T\}$.
Let us denote:
\begin{eqnarray}
  &V^m_t\coloneqq L^m V^\dsF_t,
                  ~~~~\Pi^m_t\coloneqq L^m \Pi^\dsF_t, \nonumber \\  
  &V^\dsF_{t, \theta}\coloneqq \alpha ^\top _\theta (D_\theta V^\dsF_t),
                          ~~~~ \Pi^\dsF_{t, \theta}\coloneqq (D_\theta \Pi^\dsF_t)\alpha_\theta.\nonumber 
 % F_{t, \theta}(y, z) \coloneqq&  -(y - \alpha^\top_t z)s^b_t -h_t y +\alpha^\top_\theta (D_\theta \xi^b_t) .
\end{eqnarray}
Then $(V^m, \Pi^m)$ and $(V^\dsF_{t, \theta}, \Pi^\dsF_{t, \theta})$ are given by the following:
\begin{align}
  V^\dsF_{t, \theta} =& \alpha _\theta D_\theta \xi+\int_{t}^{T}F_s(V^\dsF_{s,
                   \theta},\Pi^\dsF_{s,\theta}) \df s - \int_{t}^{T}(\Pi^\dsF_{s,\theta})^\top\df W_s,
                   \nonumber \\
  V^m_{t} =& L^m \xi+\int_{t}^{T}F_s(V^m_s, \Pi^m_s) \df s - \int_{t}^{T}(\Pi^m_s)^\top\df
             W_s. \nonumber 
\end{align}
Therefore, by (\ref{thm2:condition}), $V^m_t \leq V^\dsF_{t, \theta}$, for any $\theta \leq
t$. It then follows that:
\begin{align}
  L^mY^\#_t+L^m\pt_t +\Bt^\epsilon_t- \alpha^\top_t(Z^\#_t+Z^P_t) 
  \leq  L^mV^\dsF_t - \alpha^\top_t\Pi^\dsF_t =L^mV^\dsF_t - V^\dsF_{t, t} \leq0.\nonumber 
\end{align}
Therefore, by uniqueness of $(Y^\dsF, Z^\dsF)$, we obtain $(Y^\dsF, Z^\dsF)=(Y^\#,
Z^\#)$. Furthermore, by \cref{fba}, $\text{FBA}=0$. The proof of (\rom{2}) is 
similar to (\rom{1}). 
\end{proof}
\end{theorem}

\iffalse
\begin{align}
  Y^{\#\#}_t
  =& \int_{t}^{u}B^{-1}_s\df \frD_s+\int_{t}^{u} -\bigg[\big(L^m_sY^{\#\#}_s - \alpha_s^\top Z^{\#\#}_s\big)s^{b}_s
     +h^H_sL^HL^m_sY^{\#\#}_s\bigg]\df s
    -\int_{t}^{u}Z^{\#\#}_s\df W_s     
\end{align}
If $\frD = \1_{U \leq t}\xi^U + \1_{T \leq t}\xi^T$, then for $t \in [0, U)$,
\begin{align}
    Y^{\#\#}_t
  =& \xi^U +\int_{t}^{U} -\bigg[\big(L^m_sY^{\#\#}_s - \alpha_s^\top Z^{\#\#}_s\big)s^{b}_s
     +h^H_sL^HL^m_sY^{\#\#}_s\bigg]\df s
    -\int_{t}^{U}Z^{\#\#}_s\df W_s,
\end{align}
while for $t \in (U, T)$,
\begin{align}
   Y^{\#\#}_t
  =& \xi^U +\xi^T+\int_{t}^{T} -\bigg[\big(L^m_sY^{\#\#}_s - \alpha_s^\top Z^{\#\#}_s\big)s^{b}_s
     +h^H_sL^HL^m_sY^{\#\#}_s\bigg]\df s
    -\int_{t}^{T}Z^{\#\#}_s\df W_s,
\end{align}
\fi

\section{Examples and a Closed-form Solution}
\label{sec:example}
Many standard derivatives satisfy the conditions in \Cref{thm:clean} and
\Cref{thm:replacement}. We will apply the main theorems to  several derivatives
and provide a closed-form solution for a call option. In the following, for
$i \in I$, we denote
 $\St^i \coloneqq B^{-1}S^i$. 
In addition, recall that, in the main theorems, we defined $\xi$ by
$  \xi\coloneqq B_T^{-1}\Delta\frD_T$. 

\subsection{Clean Price}
Banks purchase Treasury bonds that return less than their funding
rate. \cite{hull2012fva} firmly asserted that this demonstrates that FVA should not be
considered in derivative prices.  We will show that, when purchasing bonds, FCA$=0$
for the hedger. The financial reason for the null FCA is as follows. When the
hedger has redundant cash to purchase a bond, she may invest in either the bond or
her lending account. In other words, the opportunity cost for purchasing the bond is a  benefit coming from her lending rate. Therefore, the lending rate would
constitute
the major factor to determine the bond price, and if we assume $R^\ell = r$, as in
\cite{burgard2010partial}, the fair price for the hedger is approximately the
same as the bond price derived from discounting with the Treasury rate.  The
argument above will be verified rigorously in the next example using our main
theorem.  Recall that, in the main theorem, we only assume that $s^\ell$ and
$s^b$ are deterministic. As long as the spreads are deterministic, we can apply
the theorems to interest rate derivatives.

\begin{example}[A zero coupon bond] \label{example:treasury} Let us consider
  a hedger purchasing a  bond with unit notional amount, i.e.,
  $\frD = -\1_{\llbracket T, \infty\llbracket }$. We assume that for
  $i \in \{H, C\}$, $\df G\IT = -h\IT G\IT \df t$, where $(h^i_t)_{t \geq0}$ are
  deterministic processes. To consider a government counterparty, one may want
  to set $h^C=0$. We assume that the OIS rate $r$ is given by the following:
\begin{align}
  \df r_t = \kappa(\mu-r_t)\df t + \zeta\df W_t, \label{ex:model:r}
\end{align}
for some $\kappa , \mu , \zeta >0$. Therefore, we have:
\begin{align}
  \sigma^1 =& \sigma^H = \sigma^C=-\frac{\zeta [1-e^{-\kappa(T-t)}]}{\kappa}.\nonumber 
\end{align} 
We also assume $\rho =\emptyset$. Then $\sum_{i \in I}\pit^i\sigma^i = \sigma^1(\sum_{i \in I}\pit^i)$. Consequently,
$  \phi(z) = \alpha(z+Z^P) = (\sigma^1)^{-1}(z+Z^P)$. 
From \cref{ex:model:r}, we can find:
\begin{align}
 r_t = r_0e^{-\kappa t} + \mu(1-e^{-\kappa t}) + \zeta \int_{0}^{t}e^{-\kappa(t-u)}\df W_u.\nonumber  
\end{align}
Then by Corollary 3.19 in \cite{di2009malliavin}, 
$D_\theta r_t = \zeta e^{-\kappa(t-\theta)}$ for any $\theta \leq t$. It then follows that:
\begin{align}
  D_\theta B^{-1}_t =& -B_t^{-1}\int_{\theta}^{t}D_\theta r_s \df s = -B_t^{-1}\zeta \int_{\theta}^{t}e^{-\kappa(s-\theta)} \df s
  =-B_t^{-1}\frac{\zeta [1-e^{-\kappa(t-\theta)}]}{\kappa}.\nonumber 
\end{align}
Recalling that $\xi =  -B^{-1}_T$:
\begin{align}
  \xi - \alpha_\theta D_\theta \xi =&  \xi - (\sigma^1_\theta)^{-1} D_\theta \xi\nonumber \\
  =&-B_T^{-1}+(\sigma^1_\theta)^{-1}B_T^{-1}\frac{\zeta [1-e^{-\kappa(T-\theta)}]}{\kappa}=0.\nonumber 
\end{align}
It then follows that for any $\theta \leq t$, $\pt_t- \alpha_\theta D_\theta\pt_t=0$. Moreover,
  $\alpha_t^\top Z^P_t =  (\sigma^1_t)^{-1} D_t\pt_t  = -\dsE[-B_T^{-1}\mid\calF_t]<0$. Therefore,
  \cref{thm:czf2} and \cref{thm:cxl} are satisfied.
  Finally to satisfy \cref{thm:cx}, we assume $\epsilon \geq0$. 
Then by (\rom{2}) in \Cref{thm:clean}, $(Y^\dsF, Z^\dsF)$ satisfies:
\begin{align}
  Y^\dsF_t =& \int_{t}^{T} -\bigg[\Big(Y^\dsF_s
              +L^m\pt_s+\Bt^\epsilon_s-\phi_s(Z^\dsF_s)\Big)s^{\ell}_s
              -s^\epsilon_s\Bt^\epsilon_s+h_sY^\dsF_s\bigg]\df s\nonumber\\
  &+\int_{t}^{T}
    -\big[\1_{\pt_s\geq0}h^H_sL^HL^m+\1_{\pt_s<0}h^C_sL^CL^m\big]\pt_s\df s
    -\int_{t}^{T}(Z^\dsF_s)^\top\df W_s, \label{ex1:yf}
\end{align}
and we have FCA$=0$. \qed 
\end{example}
\begin{remark}
 It is worth noting the financial meaning of $\epsilon \geq0$ in
 \Cref{example:treasury}. Recall that $\epsilon$ represents the initial funding state
 of the bank. When $\epsilon <0$, the funding benefit from the new contract is used for
 deducing the bank's funding cost, rather than increasing the funding benefit, i.e.,
 FCA may still exist.
\end{remark}
To find an analytical form of (\ref{ex1:yf}), let
$V^\dsF \coloneqq Y^\dsF + \pt$, $\Pi^\dsF = Z^\dsF + Z^P$, and let $\dsQ^\ell$
denote an equivalent measure such that $(B^\ell)^{-1}S^i$ is $(\dsQ^\ell, \dsG)$-local
martingales. In particular, $\{W^\ell_t\}_{t\geq 0 } \coloneqq \big\{W_t
-\int_{0}^{t}\alpha_ss^\ell_s\df s \big\}_{t\geq 0 }$
is an $(\dsF, \dsQ^\ell)$-Brownian motion. Then, (\ref{ex1:yf}) becomes:
\begin{align}
  V^\dsF_t =&\xi
  + \int_{t}^{T}\Big[-(s^\ell_s+h_s)V^\dsF_s +(s^\epsilon_s-s^\ell_s)\Bt^\epsilon_s+\beta_s \pt_s\Big]\df s
  -\int_{t}^{T}\Pi^\dsF_s \df W^\ell_s,\nonumber 
\end{align}
where $\beta_t\coloneqq (1-L^m)s^\ell_t + h_t-h^C_tL^CL^m$.
Let $A_t \coloneqq \exp{\big[-\int_{0}^{t}(s^\ell_s+h_s) \df s\big]}$. Then $V^\dsF$
can be represented by the following conditional expectation form:
\begin{align}
  V^\dsF_t = A^{-1}_t\dsE^\ell\bigg[A_T\xi +
  \int_{t}^{T}A_s\big[(s^\epsilon_s-s^\ell_s)\Bt^\epsilon_s+\beta_s \pt_s\big]
  \df s\bigg|\calF_t\bigg], \label{analyticform}
\end{align}
where $\dsE^\ell$ is the expectation under $\dsQ^\ell$.  However, there is no
additional advantage
with the \textit{clean price} convention and we cannot find a closed-form
solution of $V^\dsF$. This is due to the mismatch of the pricing measures in
$\dsE^\ell[\beta_s \pt_s\big|\calF_t] = \dsE^\ell[ \dsE[\beta_s\xi|\calF_s]\big|\calF_t]$.
To avoid this difficulty, \cite{brigo2017funding} considered un-collateralized
contracts with null cash-flow at defaults. On the other hand,
\cite{bichuch2017arbitrage} assumed that the close-out amount and collateral are
calculated by the risk-neutral price under $\dsQ^\ell$ (or $\dsQ^b$), namely:
\begin{align}
  e_t =& (B^\ell_t)^{-1}\dsE^{\ell}[(B^\ell_T)^{-1}\Delta \frD_T|\calF_t], \label{fvaprice}\\
  %\nu \coloneqq& \Delta \frD_T, \\
  m_t =& (1-L^m)e_t. 
\end{align}
In these cases, the pricing measures are aligned and a closed-form solution can
be obtained. However, note that (\ref{fvaprice}) is
  \textit{clean price} \texttt{+} "the hedger's FVA".
As will be shown later, when \textit{replacement cost} is assumed, the
inconsistency of pricing measures does not appear and closed-form solutions are
given. However, recall that the hedger's funding information is already
considered in $V_-$. Therefore, the consistency of
pricing measures is inherent  in \textit{replacement cost}.

\subsection{Replacement Cost}
In the next example, we deal with a
non-Markovian case. This is one advantage of BSDEs and Malliavin calculus.
\begin{example}[Floating strike geometric Asian call option with
  \textit{replacement cost}]
Let $n=1$, $\rho = \{H, C\}$, $e = V_- - B^\epsilon$, and the traded assets are
given by the following:
\begin{align}
  \df S^1_t &= rS^1_t\df t + \sigma^1_tS^1_t\df W_t,\nonumber\\
  \df S^i_t &= rS^1_t\df t - S_{t-}^i\df M^i_t, ~~i \in \{H, C\}.\nonumber
\end{align}
Since $\rho = \{H, C\}$, $\sum_{i\in I}\pit^i\sigma^i =\sigma^1 \pit^1$ and
$\sum_{i\in I\setminus\rho}\pit^i = \pit^1$. Therefore:
\begin{align}
  \phi_t(z) = \alpha_t(z+Z^P), ~~~\text{and}~~~\alpha_t = (\sigma^1_t)^{-1}.
\end{align}
We also consider a floating strike Asian call option:
\begin{align}
  \frD = \1_{\llbracket T, \infty\llbracket}B_T(\St^1_T - B_T^{-1}K \dsI_T)^+,
  ~~~~\dsI_T \coloneqq \exp{\bigg(\frac{1}{T}\int_{0}^{T}\ln{(S^1_u)}\df u\bigg)}. 
  \nonumber 
\end{align}
We will then show (\ref{thm2:cond}). By
Theorem 3.5 in \cite{di2009malliavin}, for $\theta \leq T$:
\begin{align}
  \alpha_\theta D_\theta \dsI_T = &  \frac{\alpha_\theta}{T}\dsI_T\int_{\theta}^{T}D_\theta\ln{(S^1_u)}\df u
  =\frac{\alpha_\theta}{T}\dsI_T\int_{\theta}^{T}(S^1_u)^{-1}\sigma^1_\theta S^1_u\df u =\frac{T-\theta}{T}\dsI_T.\nonumber 
\end{align}
Since $  \alpha_\theta D_\theta\St^1_T = \St^1_T$, we have:
\begin{align}
 L^m\xi -\alpha_\theta D_\theta \xi =& \1_{\St^1_T \geq B_T^{-1}K \dsI_T}\big[L^m(\St^1_T - B_T^{-1}K
                    \dsI_T)-\alpha_\theta D_\theta(\St^1_T - B_T^{-1}K  \dsI_T)\big]\nonumber \\
  =&\1_{S^1_T \geq K \dsI_T}\Big[(L^m-1)\St^1_T - B^{-1}_TK
     \dsI_T\Big(L^m-\frac{T-\theta}{T}\Big)\Big]\nonumber \\
  \leq& (L^m-1)(\St^1_T- B^{-1}_TK \dsI_T)^+\leq0. \nonumber 
\end{align}
Therefore, FBA$=0$, where $\epsilon \leq0$. 
\end{example}
As the final example, we deal with a bond option. 
\begin{example}[A bond option with \textit{replacement
    cost}]\label{example:bondoption}
  We assume that the same market conditions hold
  as in \Cref{example:treasury}. Let $\frD_t = \1_{T\leq t}(S^1_{T, U}-K)^+$,
  where $K > 0$ and $S^1_{\cdot, U}$ is a zero coupon bond with $U>T$ as its
  maturity. We consider two defaultable bonds with the same maturity, i.e.,
 $S^1_{t, U} \coloneqq B_t\dsE[B_U^{-1}\mid\calF_t]$ and 
$ S^i_{t, U} \coloneqq B_t\dsE[\1_{U < \tau^i}B_U^{-1}\mid\calG_t], ~i \in \{H, C\}$. 
Recall  that $  \phi(z) = \alpha(z+Z^P) = (\sigma^1)^{-1}(z+Z^P)$, and 
$  \sigma^1 = \sigma^H = \sigma^C-\zeta [1-e^{-\kappa(U-t)}]/\kappa$. 
In addition, recall the definition of $\xi$: 
$ \xi =  \1_{ S^1_{T, U} \geq K}(\St^1_{T, U}-KB^{-1}_T)$. 
We can see that:
\begin{align}
  \alpha_\theta D_\theta (KB^{-1}_T)
  =(\sigma_\theta^1)^{-1}K D_\theta B_T^{-1} = -KB_T^{-1}(\sigma_\theta^1)^{-1}\frac{\zeta
     [1-e^{-\kappa(T-\theta)}]}{\kappa}
 = KB_T^{-1} \frac{1-e^{-\kappa(T-\theta)}}{1-e^{-\kappa(U-\theta)}}.\nonumber 
\end{align}
Furthermore, $ \alpha_\theta D_\theta \St^1_{T, U} = (\sigma_\theta^1)^{-1}\sigma_\theta^1  \St^1_{T, U} = \St^1_{T, U}$. 
It then follows that, on $\{ S^1_{T, U} \geq K\}$:
\begin{align}
  L^m \xi - \alpha_\theta D_\theta \xi
  =&L^m(\St^1_{T, U}-KB^{-1}_T)- \alpha_\theta D_\theta(\St^1_{T, U}-KB^{-1}_T)\nonumber\\
  =&(L^m-1)\St^1_{T, U} - KB^{-1}_T\bigg(L^m -
     \frac{1-e^{-\kappa(T-\theta)}}{1-e^{-\kappa(U-\theta)}}\bigg)\nonumber\\
  \leq&(L^m-1)\St^1_{T, U} - KB^{-1}_T(L^m -1)
     =(L^m-1)(\St^1_{T, U}- KB^{-1}_T) \leq0.\label{direction}
\end{align}
Therefore, by (\rom{1}) in \Cref{thm:replacement},  $(Y^\dsF,
Z^\dsF)$ is given by:
\begin{align}
  Y^\dsF_t = \int_{t}^{T} -\bigg[\Big(L^mY^\dsF_s
              +L^m\pt_s-\phi_s(Z^\dsF_s)\Big)s^{b}_s
              +h^H_sL^HL^m(Y^\dsF_s+\pt_s)\bigg]\df s
  -\int_{t}^{T}Z^\dsF_s\df W_s, \label{ex3:bsde}
\end{align}
and FBA$=0$. Note that  $\text{DVA}\not=0$ because $\xi \geq0$, but
$\text{FBA}=0$ when $\epsilon \leq0$ because $\xi$ increases in $S^1_{t, U}$. \qed
\end{example}
As we have shown, FBA and DVA are affected by different mathematical structures
of the pay-off $\xi$. We have summarized the cases of vanilla options in
\Cref{table:FBA.DVA}. Only when the hedger sells a put option are FBA and DVA 
both positive, but still $\text{FBA}\not=\text{DVA}$.
\begin{table}[htp]
  \centering
\begin{tabular}{ |c | c | c |}
    \hline
     & FBA  &  DVA \\ \hline
    buy/call & positive &  nil\\ \hline
  sell/call & nil &  positive\\ \hline
    buy/put & nil&  nil \\ \hline
    sell/put &  positive &  positive \\ \hline
\end{tabular}
\caption{DVA and FBA with respect to option contracts}  
\label{table:FBA.DVA}     
\end{table}

\subsection{A Closed-form Solution of a Call Option under Replacement Cost}
Under \textit{replacement cost}, we can find a closed-form solution.
As an example, we discuss  the solution of a stock call option. 
Let $n=1$, $e=V_- -B^\epsilon$, $\rho = \{H, C\}$,
$\epsilon\leq0$, $\frD = \1_{\llbracket T, \infty\llbracket}(S^1_T - K)^+$, where:
\begin{align}
 \df S^1_t =& r S^1_t \df t + \sigma^1 S^1_t \df W_t, \nonumber \\  
 \df S^i_t =& r S^i_t \df t +  S^i_{t-} \df M^i_t, ~i\in \{H, C\},    \nonumber 
\end{align}
for some constants $r$ and $\sigma^1$. We also assume that $s^b$, $h^H$ are
constant. Recall the following: 
\begin{align}
  \xi = B_T^{-1}(S^1_T - K)^+, ~~~~ \phi_t(z) = \alpha(z+Z^P_t) =(\sigma^1)^{-1}(z+Z^P_t).\nonumber 
\end{align}
It is easy to verify that:
\begin{align}
  L^m \xi -\alpha_\theta D_\theta\xi
  =   L^m \xi -(\sigma^1)^{-1}D_\theta\xi =\1_{S^1_T \geq K}\big[(L^m-1)\St^1_T - KL^mB_T^{-1}
     \big]\leq0. \nonumber 
\end{align}
Therefore, by (\rom{1}) in \Cref{thm:replacement}, FBA$=0$ and $(Y^\dsF,
Z^\dsF)$ satisfies:
\begin{align}
  Y^\dsF_t = \int_{t}^{T} -\bigg[\Big(L^mY^\dsF_s +L^m\pt_s-\phi_s(Z^\dsF_s)\Big)s^{b}_s\bigg]\df s 
  +\int_{t}^{T}
    \big[-h^H_sL^HL^m(Y^\dsF_s+\pt_s)\big]\df s
    -\int_{t}^{T}Z^\dsF_s\df W_s. \label{closed:bsde}
\end{align}
Let $V^\dsF \coloneqq Y^\dsF + \pt$, $\Pi^\dsF \coloneqq Z^\dsF + Z^P$. In addition, let
$\dsQ^b$ denote an equivalent measure, under which:
\begin{align}
    W^b_t \coloneqq &W_t -\int_{0}^{t}s^b\alpha_s  \df s \label{bB}
\end{align}
is an $(\dsF, \dsQ^b)$-Brownian motion and $\dsE^b$ denotes the expectation under
$\dsQ^b$. Then (\ref{closed:bsde}) becomes:
\begin{align}
 V^\dsF_t = \xi -\int_{t}^{T} (s^b + h^HL^H)L^m V^\dsF_s - \int_{t}^{T}\Pi^\dsF_s \df
  W^b_s.  \nonumber 
\end{align}
It then follows that:
\begin{align}
  V^\dsF_t = \chi^{-1}_t\dsE^b[A_T\xi |\calF_t], ~~~\chi_t \coloneqq e^{-\beta t},
              ~~~\beta\coloneqq (s^b + h^HL^H)L^m. \label{firstrep}
\end{align}
Note that there is no inconsistency of pricing measures. This is because we
considered the hedger's funding cost and benefit in the close-out amount, 
$e=V_-$. To represent (\ref{firstrep}) in an explicit form, we write:
  $\xi = (B^b_T)^{-1}B^b_TB^{-1}_TB_T\xi =e^{s^b T}(B^b_T)^{-1}B_T\xi$. 
Then (\ref{firstrep}) becomes, for $t < \btau$:
\begin{align}
 V^\dsF_t = \chi_t^{-1}\chi_Te^{s^b
  T}(B^b_t)^{-1}B_t^b\dsE^b[(B^b_T)^{-1}B_T\xi|\calF_t].\nonumber 
\end{align}
More explicitly:
\begin{align}
  V_t=B_tV^\dsF_t   =&\exp{\big[ \big(s^b(1-L^m)-h^HL^HL^m  \big)(T-t)\big]}\calC^b(t, S^1_t),
  % \dsE^b[(B^b_T)^{-1}(S^1_{T}-K)^+|\calF_t],
  \label{closed:sol}\\ 
  \calC^b(t, S^1_t)\coloneqq& S^1_t\Phi(d(t, S^1_t))
                              -Ke^{-R^b(T-t)}\Phi\big(d(t, S^1_t)-\sigma^1\sqrt{T-t}\big),\\
  \Phi(x)\coloneqq&\int_{-\infty}^{x}e^{-y^2/2}/\sqrt{2\pi} \df y,\\
  d(t, x) \coloneqq& \frac{\ln{(S^1_t/K)} + \big(R^b +(\sigma^1)^2/2\big)(T-t)}{\sigma^1\sqrt{T-t}}.
\end{align}
Note that $\partial_{s^\ell}V_t^\dsF = 0$. Furthermore, in (\ref{closed:sol}), we can see that
  $(1-L^m)s^b+L^m(-h^HL^H)$ 
is a weighted sum of $s^b$ and $-h^HL^H$. This shows how the effect of DVA is
transferred to FCA as $L^m$ changes. As $L^m$ increases, the effect of funding
cost is diminished, since the cost of posting the collateral is more expensive than
the interest rate remunerating on the collateral, i.e.,  the OIS rate. More
precisely, 
  $\exp{(-h^HL^HL^m(T-t))}$ 
is a deduction from  DVA, while
  $\exp{(s^b(1-L^m)(T-t))}$
is a compensation for the hedger for  posting the collateral.  However, this is not
the only part of FCA. The other part of FCA is the cost to acquire $S^1$, and it is
included in $\calC^b$.

\section{Conclusion}
In summary, we discussed the binary nature of FVA. According to the binary
nature of FVA, we can recover linear BSDEs and analytical solutions can be
found. Under \textit{replacement cost}, the analytical solution can be
represented in a closed-form. As a byproduct, this feature of FVA explains how
FBA and DVA are different.  In addition, this result provides an interpretation
for banks purchasing Treasury bonds with the presence of funding rates that are
higher than the OIS rate.

\bibliography{binary.funding}   
 
\appendix

\section{Spaces of Random Variables and Stochastic Processes}
\label{sec:spaces}       
\begin{definition} Let $m \in \mathbb{N}$ and $p \geq 2$.
\begin{itemize}
\item $\dsL^p_T$: the set of all $\calF_T$-measurable random variables $\xi$, such that:
 $ \Vert\xi \Vert_p\coloneqq \dsE[|\xi|^p]^{\frac{1}{p}}<\infty$. 
\item $\dsS^p_T$: the set of all real valued,
  $\dsF$-adapted, c\`adl\`ag\footnote{Right continuous
    and left limit} processes $(U_t)_{t\geq0}$, such that:
  \begin{align}
    \Vert U \Vert_{\dsS_T^p} \coloneqq \dsE\big(\sup\limits_{t \leq T}\vert
    U_t\vert^p\big)^{\frac{1}{p}}< \infty.  \nonumber
  \end{align}
\item $\dsH^{p, m}_T$: the set of all $\dsR^m$-valued, $\dsF$-predictable
   processes $(U_t)_{t\geq0}$, such that:
  \begin{align}
    \Vert U \Vert_{\dsH^p_T} \coloneqq \dsE\Big(\int_0^T\big\vert U_t \big\vert^p\df
    t\Big)^{\frac{1}{p}} < \infty. \nonumber 
  \end{align}
\item $\dsH^{p, m}_{T, loc}$ : the set of all $\dsR^m$-valued, $\dsF$-predictable
   processes $(U_t)_{t\geq0}$, such that:
 $\int_0^T\big\vert U_t \big\vert^p\df t< \infty, \text{a.s.}$ 
\end{itemize}
\end{definition}
\section{Reduction of Filtration}
\label{app:red}
The next lemma is from \cite{bielecki2008pricing} and Chapter 5 in
\cite{bielecki2010credit}.  
\begin{lemma}\label{lem:red} Let $i \in \{H, C\}$.
\begin{enumerate}[(i)]
\item Let $U$ be an $\calF_s$-measurable, integrable random variable for some
  $s \geq0$. Then, for any $t \leq s$:
  \begin{align}
    \dsE(\1_{s <\tau}U\vert \calG_t) =& \1_{t < \tau}G_t^{-1}
    \dsE (G_sU \vert\calF_t), \nonumber\\
    \dsE(\1_{s <\tau^i}U\vert \calG_t) =& \1_{t < \tau^i}(G^i_t)^{-1}
    \dsE (G^i_sU \vert\calF_t). \nonumber
  \end{align}
\item Let $(U_t)_{t \geq0}$ be a real-valued, $\dsF$-predictable process and
  $\dsE\vert U_{\btau} \vert < \infty$. Then:
  \begin{align}
    \dsE(\1_{\tau = \tau^i \leq T}U_{\tau}\vert \calG_t) =
    \1_{t< \tau}G_t^{-1}\dsE\Big( \int_{t}^{T}h^i_s G_sU_s\df s \Big\vert \calF_t\Big). \nonumber
  \end{align}
\end{enumerate}
\end{lemma}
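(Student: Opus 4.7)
The plan is to derive part~(\rom{1}) from the classical ``key lemma'' of progressive enlargement of filtration by random times, and then to derive part~(\rom{2}) by representing the quantity $\1_{\tau=\tau^i\leq T}U_\tau$ through the $(\dsG,\dsQ)$-martingale $M^i$ supplied by \Cref{assm:intensity} and reducing back to part~(\rom{1}).

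For part~(\rom{1}), I would rely on the standard fact that for every $A\in\calG_t$ there exists $B\in\calF_t$ such that $A\cap\{t<\tau\}=B\cap\{t<\tau\}$, together with $G_t>0$ a.s.\ on $\{t<\tau\}$, which follows from \Cref{assm:intensity}\,(\rom{1}). Given such $A$ and $B$, the defining identity $\dsE(\1_{s<\tau}\vert\calF_s)=G_s$ and the $\calF_s$-measurability of $U$ give
\begin{align}
\dsE\big[\1_A\1_{t<\tau}\1_{s<\tau}U\big]
=\dsE\big[\1_B\1_{s<\tau}U\big]
=\dsE\big[\1_B U G_s\big]
=\dsE\big[\1_B\,\dsE(UG_s\vert\calF_t)\big], \nonumber
\end{align}
while pulling the $\calF_t$-measurable factor $G_t^{-1}\dsE(G_sU\vert\calF_t)$ out of $\dsE\big[\1_A\1_{t<\tau}G_t^{-1}\dsE(G_sU\vert\calF_t)\big]$ and using $\dsE(\1_{t<\tau}\vert\calF_t)=G_t$ yields the same value, which proves the first equality. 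Replacing $\tau$ by $\tau^i$ and $G$ by $G^i$ throughout (and invoking the analogous key-lemma statement inside $\dsG$) gives the second.

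For part~(\rom{2}), set $N^i_s\coloneqq\1_{\tau^i\leq s\wedge\tau}$ so that $\1_{\tau=\tau^i\leq T}U_\tau=\int_0^T U_s\,\df N^i_s$, and by \Cref{assm:intensity} decompose $\df N^i_s=\df M^i_s+h^i_s\1_{s\leq\tau}\,\df s$. Since $U$ is $\dsF$-predictable, hence $\dsG$-predictable, one has $\dsE\big(\int_0^T U_s\,\df M^i_s\,\big\vert\,\calG_t\big)=\int_0^t U_s\,\df M^i_s$; on $\{t<\tau\}$ this equals $-\int_0^t U_s h^i_s\,\df s$, which cancels with the past portion of $\int_0^{T\wedge\tau}U_s h^i_s\,\df s$. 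Hence on $\{t<\tau\}$,
\begin{align}
\dsE\big(\1_{\tau=\tau^i\leq T}U_\tau\,\big\vert\,\calG_t\big)
=\dsE\Big(\int_t^T U_s h^i_s\1_{s<\tau}\,\df s\,\Big\vert\,\calG_t\Big), \nonumber
\end{align}
and a Fubini exchange together with part~(\rom{1}) applied pointwise in $s$ produces the stated right-hand side. The main obstacle I anticipate is not conceptual but technical: verifying integrability of $U_sh^i_sG_s$ to justify Fubini, promoting $\int_0^\cdot U_s\,\df M^i_s$ from a local to a true $(\dsG,\dsQ)$-martingale under the hypothesis $\dsE\vert U_\btau\vert<\infty$, and handling the null set $\{G_t=0\}\cap\{t<\tau\}$ where the inversion $G_t^{-1}$ is delicate. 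These points are dispatched as in \cite{bielecki2008pricing} and Chapter~5 of \cite{bielecki2013credit}.
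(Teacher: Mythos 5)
The paper offers no proof of this lemma: it is imported verbatim from \cite{bielecki2008pricing} and Chapter 5 of \cite{bielecki2013credit}, so there is no in-paper argument to compare against, and your route is exactly the standard one from those sources. The first identity of part (i) is correct as you set it up (key lemma for $\{t<\tau\}$, tower property through $\calF_s$, and pulling out the $\calF_t$-measurable factor $G_t^{-1}\dsE(G_sU\mid\calF_t)$ against $\dsE(\1_{t<\tau}\mid\calF_t)=G_t$). Part (ii) is also correct: the representation $\1_{\tau=\tau^i\le T}U_\tau=\int_0^TU_s\,\df N^i_s$ with $N^i_s=\1_{\tau^i\le s\wedge\tau}$, the decomposition through $M^i$, the cancellation of the past compensator on $\{t<\tau\}$, and the Fubini plus part (i) applied to the $\calF_s$-measurable integrand $U_sh^i_sG_s$ all go through; the integrability you worry about follows by running the same identity at $t=0$ with $|U|$ in place of $U$, which converts $\dsE|U_{\btau}|<\infty$ into $\dsE\int_0^T|U_s|h^i_sG_s\,\df s<\infty$.

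The one step that does not go through as written is the second identity of part (i). ``Replacing $\tau$ by $\tau^i$ throughout'' silently invokes a key lemma for the event $\{t<\tau^i\}$ relative to $\calG_t$; but $\calG_t$ is enlarged by \emph{both} default times, so on $\{t<\tau^i\}$ one can still observe the events $\{\tau^j\le u\}$, $u\le t$, for $j\ne i$. The key lemma therefore only produces a set $B$ measurable with respect to $\calF_t\vee\sigma(\{\tau^j\le u\}:u\le t)$, not with respect to $\calF_t$, and the Az\'ema supermartingale that then appears is $\dsQ(\tau^i>t\mid\calF_t\vee\sigma(\{\tau^j\le u\}:u\le t))$ rather than $G^i_t=\dsQ(\tau^i>t\mid\calF_t)$. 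The stated formula is thus not a formal consequence of your first computation; it requires an extra structural hypothesis (conditional independence of $\tau^H$ and $\tau^C$ given the reference filtration, as in the Cox-type constructions of the cited chapter), which your sketch should state explicitly since the paper's \Cref{assm:intensity} does not obviously supply it.
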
 

\section{Risk-Neutral Probability}
\label{sec:risk-neutral}
In this section, we provide a brief justification of the unique risk-neutral
probability. Let us assume that the underlying assets are given by:
\begin{align}\label{app:sde1}
  \begin{cases}
    \df S^i_t = \mu^i_tS_t^i\df t + \sigma^i_tS^i_t \df W^\dsP_t,  &i \in \{1, \cdots, n\},\\
    \df S^i_t = \mu^i_tS_t^i\df t + \sigma^i_tS^i_t \df W^\dsP_t-S^i_{t-}\df M^{\dsP, i}_t,
    &i \in \{H, C\},
  \end{cases}
\end{align}
where $\dsP$ is the real world probability, $(\mu^i_t)_{t\geq0}$ are
$\dsF$-progressively measurable processes, $(W^\dsP_t)_{t\geq0}$ is a Brownian motion
under $\dsP$, $(M^{\dsP, i}_t)_{t\geq0}$ are given by
$M^{\dsP, i}_t \coloneqq \1_{t\wedge \tau \geq \tau^i} - \int_{0}^{t \wedge \tau}h^{\dsP, i}_s\df s $,
for some $\dsF$-progressively measurable processes $h^{\dsP, i}$. Denote
$\mu \coloneqq (\mu^1, \cdots, \mu^n)^\top$. Since $\Sigma$ is non-singular, there exists a unique
process $\Lambda\in \dsR^n$, such that $\Sigma\Lambda = \mu - r\1$. Then, we can put:
\begin{align}\label{app:sde2}
  \begin{cases}
    \df S^i_t = r_tS_t^i\df t + \sigma^i_t S^i_t(\df W^\dsP_t+\Lambda_t \df t),  &i \in \{1, \cdots, n\},\\
    \df S^i_t = r_tS^i_t\df t
    +\sigma^i_tS^i_t (\df W^\dsP_t + \Lambda_t\df t)+[(\mu^i_t-r_t)-\sigma^i_t\Lambda_t]S_t^i\df t  -S^i_{t-}\df M^{\dsP, i}_t,
    &i \in \{H, C\},
  \end{cases}
\end{align}
Now, we define:
\begin{align} 
  W^{\dsQ}_t &\coloneqq W^{\dsP}_t + \int_{0}^{t}\Lambda_s \df s,
  \nonumber\\
  M^{i, \dsQ}_t &\coloneqq \1_{t \wedge \tau \geq \tau^i} + \int_{0}^{t \wedge \tau}h^i_s \df s,~~i \in \{H, C\},\nonumber\\
  h^{i, \dsQ}_t &\coloneqq h^{\dsP, i}_t+(\mu^i_t-r_t)-\sigma^i_t\Lambda_t.
\end{align}
Therefore, so that $B^{-1}S^i$, $i \in \{1, \cdots, n\}$, are martingales under some
probability $\dsQ$, $(W^{\dsQ}, M^{H, \dsQ}, M^{C, \dsQ})$ should be martingales. The choice of
Radon-Nykod\'ym density process can be found by slight generalization of
Proposition 5.3.1 in \cite{bielecki2010credit}. Moreover, note that the choice
of measure is unique by the uniqueness of $\Lambda$.

\section{Incremental Funding Impacts}
\label{sec:inc}
\begin{example} 
Let us consider $n=1$, $\rho =\emptyset$,
$\frD = \1_{\llbracket T, \infty \llbracket} \xi B_T$ for some
$\xi \in \dsL^2_T$. We ignore default risks and set $\pi^H = \pi^C = m=0$. In this
case, $(V, \pi^1)$ is given by:
\begin{align}
  V_t = \xi B_T +B_T^\epsilon - \int_{t}^{T}\Big[(V_s - \pi^1_s)^+R^\ell_s
  -(V_s - \pi^1_s)^-R^b_s + r_s \pi^1_s\Big]\df s  -\int_{t}^{T} \pi^1_s \sigma^1_s \df W_s. \nonumber
\end{align}
To consider a net profit/loss to the hedger without the \textit{legacy portfolio}, define:
 \begin{align}
   v\coloneqq B^{-1}(V -B^\epsilon), \label{ex.net}
 \end{align}
 and we denote that $\pit^1 \coloneqq B^{-1}\pi^1$,
 $\Bt^\epsilon = B^{-1}B^\epsilon$. Then $(v, \pit^1)$ is given by the following:
\begin{align}
  v_t = \xi - \int_{t}^{T}\Big[(v_s +\Bt^\epsilon_s- \pit^1_s)^+s^\ell_s
  -(v_s +\Bt^\epsilon_s - \pit^1_s)^-s^b_s -s^\epsilon_s\Bt^\epsilon_s\Big]\df s  -\int_{t}^{T} \pit^1_s
  \sigma^1_s \df W_s. \label{ifva.bsde}
\end{align}
Assuming there exists a unique solution $(v, \pit^1) \in \dsS^2_T \times \dsH^2_T$ of
(\ref{ifva.bsde}), then for any $t \geq0$, we define:
\begin{align}
\FBA^ \Delta_t \coloneqq &\dsE\bigg[
\int_{t}^{T}\big[(v_s +\Bt^\epsilon_s- \pit^1_s)^+s^\ell_s-(s^\epsilon_s\Bt^\epsilon_s)^+\big]\df s\bigg\vert \calF_t\bigg], \label{ex.fba.d}\\
 \text{FCA}^\Delta_t \coloneqq &\dsE\bigg[
\int_{t}^{T}\big[ (v_s +\Bt^\epsilon_s- \pit^1_s)^-s^b_s-(s^\epsilon_s\Bt^\epsilon_s)^- \big] \df s\bigg\vert \calF_t\bigg].  \label{ex.fca.d} 
\end{align}
$\text{FBA}^\Delta$ (resp. $ \text{FCA}^\Delta$) represents the incremental funding
benefits (resp. costs) for entering the new contract. $\FBA$ and $\FCA$ stand
for \textit{funding benefit adjustment} and \textit{funding cost adjustment},
respectively. The combination of $\FBA$ and $\FCA$ is often called
\textit{funding value adjustment} (FVA). More precisely:
\begin{align}
  \FVA \coloneqq \FCA - \FBA \nonumber
\end{align}

Notice that, in (\ref{ex.fba.d})-(\ref{ex.fca.d}), as $\epsilon$ increases
$v +\Bt^\epsilon -\pi^1$ is more likely to positive. Consider a case in which 
$v -\pi^1$ is negative, but $v +\Bt^\epsilon -\pi^1$ is non-negative. If we ignore the
incremental effect, i.e., $\epsilon=0$, the dealer should consider the increased
funding cost. However, in the view of incremental effects, instead, the
deduction of funding benefit should be included in the derivative
price. The dealer also needs to consider the opportunity cost for not
entering the new contract, e.g., if $\epsilon \geq0$, the lost (discounted) benefit at
$t \leq T$ would be:
\begin{align}
  \Bt^\epsilon_T -\Bt^\epsilon_t=\int_{t}^{T} \df \Bt^\epsilon_s  =\int_{t}^{T} s^\epsilon_s\Bt^\epsilon_s\df s=\int_{t}^{T} (s^\epsilon_s\Bt^\epsilon_s)^+\df s.
\end{align}
The difference between the
two impacts is the actual net benefit and cost, $\text{FBA}^\Delta$ and
$\Delta \text{FCA}^\Delta$, that should be charged to the counterparty. Indeed, by
(\ref{i.price}), (\ref{ex.net}) and (\ref{ifva.bsde}), the dealer would want to charge
 $ p = \dsE[\xi] - \FBAd_0 +  \FCAd_0$. 
In addition, if $R^\ell=R^b$, (\ref{ifva.bsde}) becomes:
\begin{align}
  v_t = \xi - \int_{t}^{T}(v_s - \pit^1_s)s^\ell_s
  \df s  -\int_{t}^{T} \pit^1_s  \sigma^1_s \df W_s. 
\end{align}
Thus, under linear funding models that $s^\ell = s^b$, $v$ does not depend on
$B^\epsilon$.  \qed
\end{example}
\section{The Proofs of Lemmas} 
\label{app:lemmas}
\begin{proof}[Proof of \Cref{lem:pf}]
  (\rom{1}) is from the definition and (\rom{2}) is a directly obtained from
  (\rom{1}).  For (\rom{3}), notice that
  $B^{-1}P + \int_{0}^{\cdot}B^{-1}_s\df \frD_s$ is an $(\dsF, \dsQ)$-local
  martingale. Thus, by (local) martingale representation property, there exists
  $Z^P \in \dsH^{2, n}_{T, loc}$, such that:
$B_t^{-1}P_t + \int_{0}^{t}B^{-1}_s\df \frD_s = \int_{0}^{t}(Z^P_s)^{\top}\df W^{}_s$
for any $t \geq0$.
Therefore, $P_t$ satisfies the SDE:
  $\df P_t = r_tP_t\df t + B_t(Z^P_t)^{\top}\df W^{}_t - \df \frD_t$.
By (\rom{3}), $P$ is an $\dsF$-adapted c\`adl\`ag process, but $\tau$ avoids any
$\dsF$-stopping time. Thus, $\Delta P_{\tau}=0$ almost surely, and equivalently $P_{\tau-}
= P_{\tau}$ a.s.        
\end{proof}
\begin{proof}[Proof of \Cref{lemma:reduction}]
By It\^o's formula:
\begin{align}
  \df (\1_{t <\btau}Y^{\dsF}_t) = \1_{t \leq \btau}\df
  Y^\dsF_t - \bm{\delta}_{\btau}(\df t)Y^\dsF_{\btau}
  =&
  \1_{t \leq \btau}\df Y^\dsF_t - \1_{t \leq \btau}Y^\dsF_t\df M_t - \1_{t \leq
     \btau}h^0_tY^\dsF\df t, \nonumber\\
  =&\1_{t \leq \btau}\df Y^\dsF_t - \1_{t \leq \btau}Y^\dsF_t\df M^H_t
     - \1_{t \leq \btau}Y^\dsF_t\df M^C_t- \1_{t \leq
     \btau}h_tY^\dsF\df t, \nonumber 
\end{align}
and:
\begin{align}
\df\Big(\1_{t \geq \btau} \big[-\1_{\btau=\tau=\tau^H}
  \Jt^{ H}_{\tau}(Y^{\dsF}_{\tau-})+&\1_{\btau=\tau=\tau^C}\Jt^{  C}_{\tau}(Y^{\dsF}_{\tau-}) \big]\Big) \nonumber\\
  =&-\1_{\tau=\tau^H\leq T }\bm{\delta}_{\tau}(\df  t)\Jt^{ H}_t(Y^{\dsF}_{t-})
  +\1_{\tau=\tau^C\leq T }\bm{\delta}_{\tau}(\df t)\Jt^{  C}_t(Y^{\dsF}_{t-}) \nonumber\\
  =&-\1_{t\leq \btau}\Jt^{ H}_t(Y^{\dsF}_{t-})\df M^H_t - \1_{t \leq
     \btau}h_t^H\Jt^{ H}_t(Y^{\dsF}_{t-})\df t\nonumber\\
   &+\1_{t\leq \btau}\Jt^{  C}_t(Y^{\dsF}_{t-})\df M^C_t
     + \1_{t \leq \btau}h_t^C\Jt^{  C}_t(Y^{\dsF}_{t-})\df t. \nonumber 
\end{align}
If we take $Y^\dsG$ as in (\ref{fsol1}):
\begin{align}
 \df Y_t^\dsG =& \1_{t \leq \btau}\df Y^\dsF_t - \1_{t \leq \btau}Y^\dsF_t\df M^H_t
     - \1_{t \leq \btau}Y^\dsF_t\df M^C_t- \1_{t \leq
  \btau}h_tY^\dsF\df t \nonumber\\
  &-\1_{t\leq \btau}\Jt^{ H}_t(Y^{\dsF}_{t-})\df M^H_t - \1_{t \leq
     \btau}h_t^H\Jt^{ H}_t(Y^{\dsF}_{t-})\df t
    \nonumber\\
  &+\1_{t\leq \btau}\Jt^{  C}_t(Y^{\dsF}_{t-})\df M^C_t
    + \1_{t \leq \btau}h_t^C\Jt^{  C}_t(Y^{\dsF}_{t-})\df t \nonumber\\
  =&\1_{t \leq \btau}\Big[-g^\dsF_t(Y^\dsF_t, Z^\dsF_t) - h_tY^\dsF_t -
     h^H_t\Jt^{ H}_t(Y^\dsF_t) + h^C_t\Jt^{  C}_t(Y^\dsF_t)\Big]\df t + \1_{t \leq
     \btau}(Z^\dsF_t)^\top \df W_t \nonumber\\
             & -\1_{t \leq \btau}[Y^\dsF_t + \Jt^{ H}_t(Y^\dsF_t)]\df M^H_t
               -\1_{t \leq \btau}[Y^\dsF_t - \Jt^{  C}_t(Y^\dsF_t)]\df M^H_t. \nonumber 
\end{align}
Therefore, by (\ref{fgenerator}), (\ref{fsol1})-(\ref{fsol3}) give
 a solution for (\ref{GBSDE}). Moreover, if (H)-hypothesis holds,
the (unique) martingale representation property holds by $W$ and $M^i$,
$i \in \{H, C\}$. Therefore, by Theorem 4.1 in \cite{crepey2015bsdes}, if
$(Y^\dsG, Z^\dsG)$ solves (\ref{GBSDE}), $((Y^\dsG)^{\tau-}, Z^\dsG\1_{\llbracket 0, \tau\llbracket})$ solves
(\ref{FBSDE}) as well.
\end{proof}

\section{The Idea of the Main Theorem}
\label{sec:idea}
\begin{example}[Stock forward contract with \textit{clean price}]\label{example:idea}
  Consider $n=1$, $e=P$. For simplicity, we assume that all parameters are
  constant and let the traded assets
  $(S^1, S^H, S^C)$ be given by:
\begin{align}
\df S^1_t =& r_tS^1_t \df t + \sigma^1S^1_t\df W_t, \nonumber \\
  \df S^i_t =& r_tS^i_t \df t -S^i_{t-}\df M^i_t, ~~i \in \{H, C\}. \nonumber 
\end{align}
Moreover, we assume that the defaultable bonds can be traded through repo markets,
i.e., $\rho =\{H, C\}$. Then $\sum_{i \in I}\pit^i \sigma^i = \pit^1\sigma^1$ and $\sum_{i \in
  I\setminus\rho}\pit^i = \pit^1$, and therefore
 $\phi(z) = \alpha(z+Z^P) = (\sigma^1)^{-1}(z+Z^P)$. 
Let $\frD = \1_{\llbracket T, \infty\llbracket}(S^1_T - K)$, for some
$K\geq0$. We denote $\St^1 \coloneqq B^{-1}S^1$. 
Recall the definition $\pt = B^{-1}P$, 
$\Jt^{  H}  = L^HL^m  \pt^+$, and 
 $\Jt^{  C}  = L^CL^m  \pt^-$.
Thus, the generator $g^\dsF$ becomes:
\begin{align}
  g^\dsF_t(y, z) =
  & -\big[y +L^m\pt_t+\Bt^\epsilon_t-\alpha(z+Z^P_t)\big]^+s^{\ell}
    + \big[y +L^m\pt_t+\Bt^\epsilon_t-\alpha(z+Z^P_t)\big]^-s^{b}+s^\epsilon\Bt^\epsilon_t\nonumber\\
  &-h^HL^HL^m \pt^+_t
    +h^CL^CL^m \pt^-_t -hy. \nonumber 
\end{align}
To explain the idea, the hedger should pay $S^1_T-K$ at maturity or
$P_t = S_t^1 -B_tB_T^{-1}K$ at an early termination $t < T$. For the payment, she
needs to retain $S^1$. To purchase $S^1$, the hedger may need to borrow money, so it
is expected that $s^\ell$ does not play an important role in maintaining the
hedging portfolio. Therefore, we suppose:
\begin{align}
  Y^\dsF +L^m\pt- \alpha(Z^\dsF +Z^P) \leq 0, ~~\dqdt. \label{idea:ineqf}
\end{align}
If the effect of \cref{idea:ineqf} is not dominated by the \textit{legacy
  portfolio}, we can recover a linear BSDE. 

To this end, we consider $(Y^\#, Z^\#)$ satisfying:
\begin{align}
  Y^\#_t =& -\int_{t}^{T} \bigg[\big(Y^\#_s +L^m\pt_s +\Bt^\epsilon_s
            -\alpha(Z^\#_s+Z^P_s)\big)s^b-s^\epsilon\Bt^\epsilon_s+hY^\#_s\bigg]\df s\nonumber\\
          &-\int_{t}^{T} h^HL^HL^m \pt_s\df s
            -\int_{t}^{T}(Z^\#_s)^\top\df W_s, \nonumber 
\end{align}
Then we will show that:
\begin{align}
 Y^\# +L^m\pt +\Bt^\epsilon- \alpha(Z^\#+Z^P) \leq 0,~~\dqdt. \label{fca-dir}
\end{align}
To demonstrate this, we take another transformation,
$V^\dsF \coloneqq Y^\# + L^m\pt+\Bt^\epsilon$ and $\Pi^\dsF \coloneqq Z^\# +
L^mZ^P$. Then, $(V^\dsF, \Pi^\dsF)$ is the solution of:
\begin{align}
  V^\dsF_t =& L^m\xi + \Bt^\epsilon_T + \int_{t}^{T}F_t(V^\dsF_s, \Pi^\dsF_s)\df s  
      -\int_{t}^{T}(\Pi^\dsF_s)^\top\df W_s,\label{bsdess}
\end{align}
where $\xi \coloneqq \St^1_T -B^{-1}_TK$ and:
\begin{align}
    &F_t(y, z)\coloneqq-(y -\alpha z)s^{b}-hy+\xi^b_t,\label{ex.idea.F}\\
  &\xi^b \coloneqq
    (h-h^HL^HL^m )\pt + h\Bt^\epsilon + \alpha(1-L^m)s^bZ^P.\label{ex.idea.xib}\nonumber 
\end{align}
Under mild conditions, we can obtain
$(V^\dsF, \Pi^\dsF)\in \dsL^2([0, T]\colon \dsD^{1, 2} \times \dsD^{1, 2})$, i.e., for
any $t\leq T $, $(V^\dsF_t, \Pi^\dsF_t) \in \dsD^{1, 2} \times \dsD^{1, 2}$ and:
\begin{align}
  \int_{0}^{T} \big(\Vert V^\dsF_t \Vert^2_{1, 2} + \Vert \Pi^\dsF_t \Vert^2_{1, 2}\big)\df t < \infty. 
\end{align}
Furthermore, $(D_tV^\dsF_t)_{0\leq t \leq T}$ is a version of
$(\Pi^\dsF_t)_{0\leq t \leq T}$. In addition, note that (\ref{fca-dir}) is equivalent to:
\begin{align}
  V^\dsF  -(1-L^m)s^b\alpha Z^P- \alpha \Pi^\dsF \leq V^\dsF -(1-L^m)s^b\alpha Z^P - \alpha DV^\dsF\leq0, ~~\dqdt. \label{fca-dir2}
\end{align}
To show (\ref{fca-dir2}), 
 for $\theta \leq t$, let $(V^\dsF_{t, \theta}, \Pi^\dsF_{t, \theta})\coloneqq(\alpha D_\theta V^\dsF_t,
 \alpha D_\theta \Pi^\dsF_t)$. Therefore, $(V^\dsF_{\cdot, \theta}, \Pi^\dsF_{\cdot, \theta})$ is given by:
\begin{align}
  &V^\dsF_{t, \theta}
  =\alpha L^mD_\theta\xi + \int_{t}^{T}F_{s, \theta}( V^\dsF_{t, \theta},  \Pi^\dsF_{t, \theta})\df s
                     -\int_{t}^{T}(\Pi^\dsF_{s, \theta})^\top\df W_s, \label{bsdessm}\\
  &F_{t, \theta}(y, z)\coloneqq-(y -\alpha z)s_t^{b}-h_ty +\alpha D_{\theta}\xi^b_t.
\end{align}
Note that $F_{t, \theta}(y, z) = F_{t}(y, z) +\alpha D_{\theta}\xi^b_t-\xi^b_t $. Specifically,
(\ref{bsdessm}) can be written as:
\begin{align}
  V^\dsF_{t, \theta}
  =&\alpha L^mD_\theta\xi + \int_{t}^{T}\Big[F_{s}
     ( V^\dsF_{s, \theta},  \Pi^\dsF_{s, \theta})+\alpha
     D_{\theta}\xi^b_s-\xi^b_s\big]\df s 
                     -\int_{t}^{T}(\Pi^\dsF_{s, \theta})^\top\df W_s. \label{bsdessm2}  
\end{align}
We will show  $V^\dsF_{\cdot} \leq V^\dsF_{\cdot , \theta} $ by comparing (\ref{bsdess}) and
(\ref{bsdessm}), and it suffices to show that for $\theta \leq
t$:
\begin{align}
  L^m\xi+\Bt^\epsilon_T-\alpha L^mD_{\theta}\xi \leq&0, ~~\text{a.s, }\label{ex.idea.ineq1}\\
  \xi^b -\alpha D_{\theta}\xi^b \leq& 0, ~~ \dqdt,\label{ex.idea.ineq2}\\
  Z^P \geq& 0, ~~ \dqdt.
\end{align}
We will show that the above inequalities hold if $\epsilon$ is not too large. More
precisely, we assume:
\begin{align}
  \epsilon \leq \epsilon_*\coloneqq K (B^\ell_T)^{-1}\min\Big\{L^m , ~ 1-\frac{h^HL^HL^m}{h}\Big\}. \label{ex.idea.eus}
\end{align}
It is easy to verify:
\begin{align}
  \Bt^\epsilon_T+L^m\xi-\alpha L^m D_{\theta}\xi =
  &\Bt^\epsilon_T+ L^m(\St^1_T - B^{-1}_TK)-(\sigma^1)^{-1}L^m D_{\theta}(\St^1_T - B_T^{-1}K)
                                   \nonumber \\
  =&\Bt^\epsilon_T+L^m(\St^1_T - B_T^{-1}K) - L^m\St^1_T=\Bt^\epsilon_T- L^mB_T^{-1}K\leq0. \nonumber 
\end{align}
Moreover, by Proposition 3.12 in \cite{di2009malliavin}:
\begin{align}
  \pt_t -\alpha D_{\theta}\pt_t
  =&\dsE[\xi|\calF_t]-\alpha D_{\theta}\dsE[\xi|\calF_t]
     =\dsE[\xi|\calF_t]-\alpha\dsE[ D_{\theta}\xi|\calF_t] =\dsE[\xi-\alpha D_{\theta}\xi|\calF_t].\nonumber 
\end{align}
Furthermore, $Z^P = D\pt = (\sigma^1)^{-1}\St^1$. It then follows that:
\begin{align} 
  \xi^b_t -\alpha D_{\theta}\xi^b_t =&
  (h-h^HL^HL^m)  \dsE[\xi-\alpha D_{\theta}\xi|\calF_t] +h\Bt^\epsilon_t \nonumber \\
  =&h(\Bt^\epsilon_t - B_T^{-1}K) + h^HL^HL^mB_T^{-1}K.
\end{align}
Therefore, by the comparison principle, for $\theta \leq t$, $V^\dsF_{t} \leq V^\dsF_{t ,
  \theta}$. In particular, $V^\dsF_{t} \leq V^\dsF_{t ,  t} = \alpha D_t V^\dsF_{t} = \alpha
\Pi^\dsF_{t}$. Consequently, (\ref{fca-dir}) is guaranteed and  $(Y^{\#}, Z^{\#}) =
(Y^\dsF, Z^\dsF)$. In addition, $(Y^\dsF, Z^\dsF)$ follows the linear BSDE
(\ref{bsdess}), so we can find an analytical form of $(Y^\dsF, Z^\dsF)$, i.e.,
$(V, \pit)$ as well. Moreover,  $\text{FBA} = 0$.
\qed
\end{example}
\begin{remark} \label{idea:remark}
\begin{enumerate}[(i)]
\item If $\epsilon \geq \epsilon^*\coloneqq K (B^b_T)^{-1}$, one may want to consider $(Y^\#,
  Z^\#)$ given by:
  \begin{align}
     Y^\#_t =& -\int_{t}^{T} \bigg[\big(Y^\#_s +L^m\pt_s +\Bt^\epsilon_s
            -\alpha(Z^\#_s+Z^P_s)\big)s^\ell-s^\epsilon\Bt^\epsilon_s+hY^\#_s\bigg]\df s\nonumber\\
          &-\int_{t}^{T} h^HL^HL^m \pt_s\df s
            -\int_{t}^{T}(Z^\#_s)^\top\df W_s, \nonumber      
  \end{align}
and obtain opposite inequalities of (\ref{ex.idea.ineq1}) and
(\ref{ex.idea.ineq2}). In this case, $\text{FCA}=0$. Therefore, the binary funding
impacts depend on the value of initial portfolio, $\epsilon$. 
\item Consider the same market conditions but
 $\frD = \1_{\llbracket T, \infty \llbracket }(K-S^1_T)$. 
In other words, the hedger is in a long position of the stock forward contract. We can
calculate  the opposite inequality of (\ref{fca-dir}) in the same manner. In this
case, $\text{FCA}=0$, but $\text{FBA}\not=0$. Therefore, offset of a
substantial portion between FCA and FBA is hardly expected.
\end{enumerate}
\end{remark}
\section{Stochastic Intensities}
\label{app:sto.int}
We assume that $h^i$, $i \in \{H, C\}$, are $\dsF$-adapted processes, but
$\sigma^H$ and $\sigma^C$ are deterministic. For simplicity, we set
$\epsilon = 0$. Let $n=2$, $\rho =\{H, C\}$, $B_t^{-1}\frD_t = \1_{T \leq t}\xi$ for some
$\xi\geq0$, and we consider \textit{replacement cost}. $\xi$ is determined by an
asset $S^1$, but the market is completed by another non-defaultable traded asset
$S^2$. As in the main sections, we assume that
 $ \Sigma \coloneqq [\sigma^1 ~~\sigma^2]^\top $
is of full rank.  In this case, we cannot expect that the transformation,
 $\phi_t   \colon  \pit_t^\top\sigma_t-Z^P \to  \sum_{i \in I\setminus \rho} \pit^i_t$,
is independent of $\pit^H$ and $\pit^C$. By \cref{fsol3},
$\pit^i$, $i \in \{H, C\}$, are related to  $Y^\dsF$. Therefore, we write $\phi$ as
$  \phi_t(y, z)$. 
 Then:
\begin{align}
  \phi_t(y, z)
  =& \alpha_t^\top\big(z+Z^P_t-(\sigma^H_T)^\top\pit^H_t-(\sigma^C_T)^\top\pit^C_t\big)\nonumber \\
  =& \1^\top(\Sigma^\top_t)^{-1}\big[z+Z^P_t-(\sigma^H_t)^\top L^HL^m(y + \pt_t)\big].\nonumber
\end{align}
Let us consider:
\begin{align}
  Y^\#_t =& \int_{t}^{T} -\bigg[\Big(L^mY^\#_s +L^m\pt_s
            +\sigma^H_s\alpha_s L^HL^m(Y^\#_s+\pt_s)-\alpha_s^\top(Z^\#_s+Z^P_s))\Big)s^{\ell}_s
            \bigg]\df s\nonumber\\
  &+\int_{t}^{T}
    \big[-h^H_sL^HL^m(Y^\#_s+\pt_s)\big]\df s
    -\int_{t}^{T}Z^\#_s\df W_s,  \nonumber 
\end{align}
We will show that $(Y^\#, Z^\#) = (Y^\dsF, Z^\dsF)$. To this end, let $V^\dsF
\coloneqq Y^\#+\pt$, $\Pi^\dsF \coloneqq Z^\#+Z^P$, and $\beta \coloneqq
L^m(1+\sigma^H\alpha L^H)$. Then $(V^\dsF, \Pi^\dsF)$ is given by the following:
\begin{align} 
  &V^\dsF_t =\xi +\int_{t}^{T} F_s(V^\dsF_s, \Pi^\dsF_s)\df s
                 -\int_{t}^{T}\Pi^\dsF_s\df W_s,\label{appen:bsde}\\
  &F_t(y, z) \coloneqq-(\beta_t y - \alpha^\top_t z\big)s^\ell_t
                 - h^H_tL^HL^my.
\end{align}
It suffices to show:
\begin{align}
  \beta V^\dsF - \alpha^\top \Pi^\dsF \geq0, ~~~\dqdt. \label{appen:ineq}
\end{align}  
We denote that for $\theta \leq t$:
\begin{eqnarray}
  &V^\beta_{t, \theta} \coloneqq \beta_\theta V^\dsF_t, ~~ \Pi^\beta_{t, \theta} \coloneqq \beta_\theta
    \Pi^\dsF_t,\nonumber \\
  &V^D_{t, \theta} \coloneqq \alpha_\theta^\top D_\theta V^\dsF_t, ~~ \Pi^D_{t, \theta} \coloneqq \alpha^\top_\theta D_\theta
    \Pi^\dsF_t . \nonumber 
\end{eqnarray}
Then, $(V^\beta_{t, \theta}, \Pi^\beta_{t, \theta})$ and $(V^D_{t, \theta}, \Pi^D_{t, \theta})$ are given by the following:
\begin{align}
 V^\beta_{t, \theta} =&\beta_\theta\xi +\int_{t}^{T} F_s(V^\beta_{s, \theta}, \Pi^\beta_{s, \theta})\df s
               -\int_{t}^{T}\Pi^\beta_{s, \theta}\df W_s,\nonumber \\
  V^D_{t, \theta} =&\alpha_\theta^\top D_\theta\xi +\int_{t}^{T} \big[F_s(V^D_{s, \theta}, \Pi^D_{s, \theta})
                -\alpha_\theta^\top (D_\theta h_s)L^HL^m V^\dsF_s\big]\df s
                -\int_{t}^{T}\Pi^D_{s, \theta}\df W_s, \nonumber 
\end{align}
Recall $\xi \geq0$. Thus, by (\ref{appen:bsde}), $V^\dsF \geq0$. Consequently, if
$\beta_\theta\xi \geq \alpha_\theta^\top D_\theta\xi$ and $D_\theta h^H_t \geq 0$, \cref{appen:ineq} is satisfied. Therefore,
FCA$=0$.  
\end{document}